\newcommand{\smw}[1]{{\color{black}{#1}}}
\newtheorem{theorem}{Theorem}
\newtheorem{corollary}{Corollary}
\newtheorem{lemma}{Lemma}
\newtheorem{claim}{Claim}
\newtheorem{proposition}{Proposition}
\newtheorem{definition}{Definition}
\newtheorem{observation}{Observation}
\begin{document}

\title{Reaching Consensus via non-Bayesian Asynchronous Learning in Social Networks}
\author{Michal Feldman\footnote{Tel-Aviv University, michal.feldman@cs.tau.ac.il. Michal Feldman is partially supported by the European Research Council under the European Union's Seventh Framework Programme (FP7/2007-2013) / ERC grant agreement number 337122.} 
\and
Nicole Immorlica\footnote{Microsoft Research, nicimm@microsoft.com} 
\and
Brendan Lucier\footnote{Microsoft Research, brlucier@microsoft.com} 
\and
S. Matthew Weinberg\footnote{MIT, smweinberg@csail.mit.edu. Supported by a Microsoft Research Fellowship.}
}
\date{}

\newcommand{\E}{\mathbb{E}}
\maketitle
\begin{abstract}
We study the outcomes of information aggregation in online social networks.  Our main result is that networks with certain realistic structural properties avoid information cascades and enable a population to effectively aggregate information.  In our model, each individual in a network holds a private, independent opinion about a product or idea, biased toward a ground truth.  Individuals declare their opinions asynchronously, can observe the stated opinions of their neighbors, and are free to update their declarations over time.  Supposing that individuals conform with the majority report of their neighbors, we ask whether the population will eventually arrive at consensus on the ground truth.  We show that the answer depends on the network structure: there exist networks for which consensus is unlikely, or for which declarations converge on the incorrect opinion with positive probability.  On the other hand, we prove that for networks that are sparse and expansive, the population will converge to the correct opinion with high probability.

\end{abstract}
\thispagestyle{empty}


\section{Introduction}

A community consists of a collection of individuals, each with their own observations and inferences.  Through social interactions, these individuals combine these private reflections with the public opinions of others to form their personal public opinions regarding matters of importance.  For many such matters, individuals have aligned goals. Thus, there is often a ground truth, a correct answer, to such questions.  When ground truth exists and when individuals' observations are more likely to lead to correct inferences than incorrect ones, the law of large numbers states that a majority of individuals, when reasoning privately, will reach correct conclusions.  This leaves a potentially substantial fraction of society with the incorrect conclusion, but it offers hope that the correct majority might influence the society creating a consensus on the ground truth.

Unfortunately, the outcome of this process of social deliberation can result in egregious errors in which the potentially small incorrect minority opinion infiltrates the entire community as individuals copy this opinion.  A situation like this, in which individuals copy opinions of others while ignoring their own observations, is called an {\it information cascade}.  Information cascades notoriously block information aggregation.  That is, although society has enough information for {\it everyone} to make the right decision with high probability, there is a substantial chance that everyone makes the wrong decision!


%
In this work we are motivated by occurrences like the following two real historical events.
In the 1930s, the United States experienced a severe drought, spurring a great innovation in agriculture: hybrid corn.
These new hybrids offered a yield $15-20\%$ greater than the open-pollinated varieties, and by the early 40s they dominated the corn belt.   Interviews with farmers regarding their adoption practices suggest that the two main factors in the acceptance or rejection of hybrid corn were personal experimentation and the opinions of friends. As farmers repeatedly weighed these factors from year to year, the farming society as a whole gradually began to herd on the highly beneficial decision to plant hybrid corn.
%

In the late 2000s, the United States experienced another period of economic decline that has come to be known as the Great Recession.  The cause of the recession is commonly attributed to the collapse of the housing bubble.  Economists have argued that, again, a main factor in investors' actions in the context of the housing market was the investment decisions of others.
Thus, again the individuals in the community herded on a certain behavior albeit this time a suboptimal one.

How is it that both communities -- farmers in the 30s and 40s, and investors in the 2000s -- reached agreement on the answer to important questions facing them?  How is it that the farmers reached the correct conclusion while the bankers were fooled \emph{en masse}?  A crucial difference between these two cases is the structure of the network over which information spreads.  Farmers live in local communities and mainly interact with geographically close neighbors whereas investors observe investment decisions of most others.

In the present day, the proliferation of online social platforms such as Facebook and Twitter serves to remove friction in the dissemination of information.  One might expect that adoption of new technologies or opinions, in the spirit of hybrid corn or housing investments, would occur at a more rapid pace as a result and have widespread impact.  This leads us to our main motivating question: does the structure of large, online social networks enable the efficient aggregation of information, while resisting the proliferation of incorrect beliefs?

An important research question is to understand the factors that influence information cascades.  What networks of social interactions, and what patterns of opinion formation allow entire societies to converge on the correct decision?  There is a long literature on the topic of social learning, focused on two different barriers for information aggregation: information suppression and information loss.  Some models, like standard rational Bayesian learning models \cite{Banerjee92,Bikchandani1992,Smith00,Banerjee04,Acemoglu11}, capture the information suppression problem.  Opinions are private and are only revealed over time.  In such a model there might never be, at any time, enough public information in the society to correctly aggregate information. The typical conclusion is that this suppression effect is worst for the complete network, and can be avoided if the network is (in some sense) sparse \cite{Banerjee04,Acemoglu11}.

In other models, such as repeated synchronous majority dynamics, agents begin by announcing their opinions publicly, so a central observer would initially be able to deduce the correct decision.  However,
since the agents use heuristics that are based on their own local view of the network to update their beliefs
(e.g., switching to the majority report of one's neighbors), the community might diverge from this state, experiencing information loss.  Indeed, there are scenarios in which a very small minority opinion can ultimately dominate the ground truth~\cite{Berger01}.  However, social learning \emph{does} occur in such models if the network is sufficiently well-connected, and no single individual is too influential \cite{Mossel2013}.

These two lines of work arrive at very different conclusions about the impact of network topology. Our work considers a setting that exhibits both barriers simultaneously.  Our model thus captures the tension between two requirements: being sparse enough to prevent information suppression, while being sufficiently connected to prevent information loss.

In our model, the decision at hand is binary, e.g., whether or not to adopt a certain technology.  There is a correct decision, and each individual has a (conditionally independent) signal regarding this decision which is more likely to be correct than incorrect (i.e. is correct with probability $1/2 + \delta$ for some $\delta > 0$).  Initially, individuals are not stating opinions (as in standard models of rational learning).  Individuals are asked to state an opinion, repeatedly and asynchronously.\footnote{To our knowledge, this is the first paper to study non-Bayesian asynchronous learning.}
When stating an opinion, individuals simply copy the majority opinion among their friends, breaking ties in favor of their private signal.  Our model therefore combines a non-Bayesian update method with the asynchronicity typical of Bayesian models.  This asynchronous model is natural in settings of local communication in a population, where the sharing of information is not globally coordinated. We ask: do these asynchronous majority dynamics result in a correct consensus with high probability, for graphs that exhibit realistic properties of large social networks?

We focus on two key features of large social networks.  First, they tend to be \emph{expansive}, meaning roughly that they do not contain very sparse cuts.  While it has been observed that small social networks tend to have sparse cuts corresponding to divisions between sub-communities, this tends not to be the case for empirically-observed large social networks \cite{LeskovecLDM08,MM11}.  Intuitively, expansiveness leads to information diffusion which allows society to reach consensus.  Second, social networks tend to be {\em sparse}.  Intuitively, sparsity should limit the rate at which a single individual's opinion can spread in the network, leading to the spread of many independent opinions, and independent opinions are good for producing correct majorities.  These two features together thus have some chance of producing a correct consensus so long as the low sparsity allows enough independent decisions to be reached before the high expansiveness takes over diffusing these opinions.

As we will show, it is not always the case that low average degree is sufficient to build a population to a correct consensus opinion. In fact, it is not even the case that this property suffices to reach a correct majority opinion. In Appendix~\ref{sec:examples2}, we provide an example of a network with constant average degree, for which the population will reach a majority on the incorrect opinion with positive probability.  The key issue in this construction is the presence of a large clique; that is, while the network is sparse in a global sense, it is not ``locally sparse'' in the sense that it contains a reasonably large dense subgraph.

Motivated by this example, we turn to stronger notions of sparsity. Specifically, we study the class of expanders with maximum degree $d$. Our main result is that for any fixed $d > 1$ and a growing family of graphs with maximum degree $d$ and sufficiently high expansiveness, the dynamics described above will reach consensus on the ground truth with high probability.

\vspace{0.1in}
{\bf Theorem (Informal):} Suppose $\{G_n\}_n$ is a growing family of graphs with maximum degree $d$, each with sufficiently large expansion as a function of $d$ and $\delta$.  Then the population will converge to consensus on the ground truth with probability $1 - o(1)$.
\vspace{0.1in}

We believe that max-degree $d$ can be relaxed to a weaker property of sparsity, such as bounded arboricity, in this theorem.
For example, in Section~\ref{sec:examples} we show that under the star topology, the population reaches a consensus on the ground truth with high probability. Yet, the class of max-degree $d$ is \emph{significantly} better understood and technically cleaner to work with.
We believe that our analysis of expanders with max-degree $d$ can be leveraged for a better understanding of convergence to consensus in more general classes of graphs, including graphs with alternative sparsity conditions.

%

\paragraph*{Our Techniques}
We prove our main result by dividing an execution of the behavior dynamics into two stages, which we analyze separately.  The first stage lasts for a linear (in $n$, the number of nodes) number of rounds, until most nodes have updated their opinions at least once.  We argue that, after the first stage ends, significantly more than half of the individuals (weighted by degree) in the network hold the ground truth as their opinion, with high probability.  This argument has two steps: first, we use results from the theory of boolean functions to establish that the expected number of nodes (weighted by degree) with the correct opinion is greater than half of the nodes in the network.  Second, to show that the number of correct opinions is concentrated around its expectation, we use the fact that the network has bounded degree.  This bounded degree implies that (with high probability) no individual will be very influential after only linearly many steps; indeed, the number of other individuals whose opinions could depend on the private signal of any given node will be small.  Hence most pairs of opinions will be independent after linearly many steps, and thus the variance of the number of correct opinions is small.

The second stage begins after most individuals have declared an opinion, and lasts until the dynamics converges.  For this stage, we use properties of expander graphs to show that if one opinion has a significant majority in the population, then this bias will be magnified as the process continues, until eventually the entire population reaches consensus.  This analysis makes use of the expander mixing lemma as well as the theory of biased random walks.  Since the second stage begins in a state where a significant majority of the population (weighted by degree) is reporting the correct opinion (from our analysis of the first stage), we conclude that the population reaches a correct majority with high probability.


While the second half of our argument shares structural similarity with \cite{Mossel2013}, the first half requires a novel approach. Specifically, because \cite{Mossel2013} studied synchronous learning, they immediately see a correct majority in round one, independent of the graph structure. Due to the asynchronous nature of our learning, showing that we ever reach a correct majority at any point during the process is technically challenging, and requires some assumptions on the graph structure (i.e., sparsity).

Note that we use the two required network properties, expansiveness and sparsity, in different parts of our analysis.  The sparsity condition is used to show that opinions are largely independent in the initial rounds of the dynamics, and hence a majority will report correctly.  The expansiveness condition is used to show that once the population reaches a clear majority, it will then quickly reach consensus on that majority opinion.

Motivated by this division of the analysis, we make a stronger conjecture that the implications of the two network properties should hold separately.  That is, we conjecture that any network with constant maximum degree leads the population to stabilize in a correct majority.
In Appendix~\ref{sec:examples2} we establish that this is indeed the case under the cycle topology.
Furthermore, we conjecture that any network with sufficiently high expansion will stabilize in a consensus (not necessarily a correct consensus).  




\subsection{Related Work}

Our work is related to a line of literature concerning the aggregation of information under Bayesian learning.  In the standard learning model, individuals are fully rational and are given noisy signals correlated with a ground truth.  The individuals sequentially report a ``best guess'' at the ground truth.  It was first observed by Banerjee \cite{Banerjee92} and Bikhchandani, Hirshleifer, and Welch \cite{Bikchandani1992} that a population may fail to aggregate information when reports are publicly observed, due to information cascades.  Smith and Sorensen \cite{Smith00} show that such information cascades can be avoided under the assumption that signals can be arbitrarily informative; i.e., that the strengths of agents' beliefs are unbounded.  In a spirit closer to our work, Banerjee and Fudenberg \cite{Banerjee04} suppose that each agent observes a random subset of the previous agents' actions, and show that asymptotic learning occurs whenever no agent is too influential (i.e., no agent is observed too often).  Acemoglu, Dahleh, Lobel, and Ozdaglar \cite{Acemoglu11} show that learning occurs under significantly more general conditions if agents are aware not only of which prior agents they observe, but also the entire history of prior agent observations.

An alternative line of work on social learning concerns the performance of non-Bayesian, heuristic methods of aggregating information.  In the classic model of DeGroot \cite{DeGroot74}, each agent's signal is a real number in the unit interval.  In each round, agents update their reports by taking a weighted average of their neighbors' reports.  Such a process must necessarily converge to a consensus with each connected component of a network.  Golub and Jackson \cite{Golub10} consider the question of whether this consensus agrees with the initial ground truth.  They find that this occurs if and only if the most influential (i.e., highest-degree) node is vanishingly influential as the population grows large.  These models assume a continuous space of opinions and reports.  In the case of discrete opinions, where reports are updated by taking the majority report of one's neighbors, Berger \cite{Berger01} shows that it is possible for an initial state with a constant-sized minority to lead ultimately to global adoption of the minority opinion.

The work most similar in spirit to the present paper is Mossel, Neeman, and Tamuz \cite{Mossel2013}.  They consider repeated simultaneous majority dynamics starting from an initial state in which each node takes opinion $0$ or $1$ independently at random, biased toward $1$ (the ground truth).  They study conditions under which a majority of the population reports $1$ once the dynamics converges; they show that this occurs if the graph is ``almost'' vertex transitive (in the sense that each vertex can be mapped to many other nodes by graph automorphisms).  They also show that if the graph is an expander, then majority dynamics will result in consensus with high probability.  Tamuz and Tessler \cite{Tamuz2013} derive sufficient conditions under which the ground truth can be reconstructed from the final state of the dynamics by any means, not necessarily by taking the majority report of the population.

The crucial difference between this line of work and our paper is that they consider synchronous dynamics while the dynamics we consider are asynchronous.
One implication of being synchronous is that one might as well assume that all agents start by reporting their signals.  (Indeed, if all agents started null, they would switch to reporting their signals on the next step).
To illustrate the significance of this, consider the complete network as an example.
If agents all begin by declaring their reports then social learning will almost certainly occur, since the population will immediatley reach consensus on the majority opinion.  On the other hand, if agents begin with null reports and update asynchronously, then the entire population will copy the opinion of the first node that reports and hence there is a good chance that social learning does not occur.

Other lines of work in distributed computation focus on using properties of social networks to show that information can be aggregated efficiently in an algorithmic matter.  For example, Kempe Dobra and Gehrke \cite{KempeDG03} show that gossip-based protocols are particularly successful at aggregating information on networks with good expansion properties.

\section{Model and Preliminaries}
\label{sec:prelim}

We consider a social network or graph $G = (V,E)$ with $|V| = n$ individuals.  Write $d(v)$ for the degree of $v$ in $G$, and $Vol(V) = \sum_{v \in V} d(v)$ for the volume of $V$ in $G$.  Individuals live in a world that is in one of two states, say red or blue.  Each individual $v$ has a private signal $X(v)\in\{red,blue\}$ regarding the state of the world. These $X(v)$ are conditionally indpendent given the state and are correct with probability $1/2+\delta$.  It will be convenient to assume, without loss of generality, that the state of the world is red and think of $red=1$ and $blue=0$.  Thus $\Pr[X(v)=1]=1/2+\delta$ for all $v$.

The individuals stochastically form and vocalize opinions about the state of the world. Let $C^t(v)\in\{red, blue, uncolored\}$ be the opinion of individual $v$ (or, equivalently, the color of node $v$) at time $t$.  Initially, individuals hold no opinions and so $C^0(v)=uncolored$.  Denote by $N_R^t(v)$ the number of $v$'s neighbors that are colored red at time $t$, and similarly denote $N_B^t(v)$ the number of $v$'s neighbors that are colored blue at time $t$. At every time $t > 0$, a node $v \in V$ is chosen uniformly at random. If $N_R^t(v) > N_B^t(v)$, then $v$ is colored red. If $N_R^t(v) < N_B^t(v)$, then $v$ is colored blue. If $N_R^t(v) = N_R^t(v)$, then $v$ is colored $X(v)$.

We first show that for a any graph $G$, this process stabilizes. That is, with probability $1$ there exists a $t < \infty$ such that $C^t(v) = C^{t'}(v)$ for all $t' \geq t$.
We do so in a standard way: define a potential function that is initially finite, bounded from below, and decreases by a constant amount in each time step.  Intuitively, our potential function counts a combination of the number of bichromatic edges in the graph and the number of self-disagreements, i.e., nodes whose stated opinion differs from their private signal. 

\begin{proposition}
For all $G$, with probability $1$, there exists a $t$ such that $C^t(v) = C^{t'}(v)$ for all $t' \geq t$. \smw{Furthermore, the expected number of steps until stabilization is at most $|V|^2 + 2|V||E|$. }
\end{proposition}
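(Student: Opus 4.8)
The plan is to instantiate the potential‑function recipe sketched just before the statement. Define
\[
\Phi^t \;=\; 2B^t \;+\; D^t \;+\!\!\sum_{v:\,C^t(v)=uncolored}\!\!\bigl(d(v)+1\bigr),
\]
where $B^t$ is the number of \emph{bichromatic} edges at time $t$ (both endpoints colored, receiving different colors) and $D^t$ is the number of colored nodes $v$ with $C^t(v)\neq X(v)$. Since $C^0(v)=uncolored$ for every $v$, we have $B^0=D^0=0$ and $\Phi^0 = \sum_v(d(v)+1) = 2|E|+|V|$, and each of the three terms is nonnegative, so $\Phi^t\ge 0$ for all $t$.

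The core of the argument is the claim that any step which changes the coloring decreases $\Phi$ by at least $1$, while a step that leaves the coloring unchanged leaves $\Phi$ unchanged. I would prove this by a short case analysis on the type of color‑changing step. \emph{Coloring step} (the selected $v$ was uncolored): $v$ leaves the uncolored sum, which drops by $d(v)+1$; $B$ increases by the number of $v$'s oppositely‑colored neighbors, which is at most $\lfloor d(v)/2\rfloor$ since $v$ adopts a (weak) majority color; and $D$ increases by at most $1$, and only when $v$ is \emph{strictly} overruled, in which case that count of oppositely‑colored neighbors is at most $(d(v)-1)/2$. Summing, $2\Delta B+\Delta D\le d(v)$, so $\Delta\Phi\le -1$. \emph{Strict‑majority flip} of a colored $v$ (with $N_R^t(v)\neq N_B^t(v)$, $v$ switching to the majority): $B$ drops by at least $1$ while $D$ rises by at most $1$ — this is exactly where the weight $2$ on $B$ is needed — so $\Delta\Phi\le -2+1=-1$. \emph{Tie‑breaking flip} ($N_R^t(v)=N_B^t(v)$ and $v$ switches to $X(v)$): $B$ is unchanged but $D$ drops by exactly $1$ — this is exactly where the $D$‑term is needed — so $\Delta\Phi=-1$. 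A selected node that does not change color changes nothing. I expect this case analysis to be the main obstacle: pinning down the weights $2$ and $d(v)+1$ so that a coloring step and a ``signal overruled'' flip are both absorbed, and correctly handling the tie‑breaking rule (the reason the plain bichromatic‑edge count is insufficient), is the only place any care is required, though it is entirely elementary.

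Granting the claim, both conclusions follow. For stabilization: in \emph{every} execution $\Phi$ starts at $2|E|+|V|$, stays $\ge 0$, and strictly decreases at each color change, so at most $2|E|+|V|$ color changes ever occur; taking $t$ to be the time of the last one yields $C^t(v)=C^{t'}(v)$ for all $t'\ge t$. For the quantitative bound, partition the timeline into epochs ending at successive color changes, of which there are at most $2|E|+|V|$. At the start of any epoch the coloring is not a fixed point of the dynamics (another change is still to come), so some node $v$ would change upon being selected; since $v$ is chosen with probability $1/|V|$ at each step and $v$'s neighborhood cannot change before \emph{some} change occurs, the number of steps until the next change is dominated by a geometric random variable with parameter $1/|V|$ and so has conditional expectation at most $|V|$. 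Summing over epochs — with the tower property absorbing the random number of epochs — gives expected time to stabilization at most $|V|\cdot(2|E|+|V|)=|V|^2+2|V||E|$.
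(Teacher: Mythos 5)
Your proposal is correct and follows essentially the same approach as the paper: a potential function combining (twice) the bichromatic-edge count, the count of self-disagreements, and a charge for uncolored nodes, shown to start at $2|E|+|V|$ and strictly decrease with every color change, followed by the same geometric-waiting-time argument for the expected bound. The only cosmetic difference is that you charge each uncolored vertex $d(v)+1$ while the paper assigns weight $2$ to every edge with an uncolored endpoint; both yield the same initial value and the same case analysis.
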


\begin{proof}
Define a potential function $F^t(v)$ that is $1$ if and only if $C^t(v) \neq X(v)$, and $0$ otherwise. Also define a potential function $G^t(e = (u,v))$ that is $2$ if either $u$ or $v$ is uncolored, or if $C^t(u) \neq C^t(v)$, and $0$ otherwise. Finally, define a potential function $H(t) = \sum_v F^t(v) + \sum_e G^t(e)$. Then $H(0) = |V| + 2|E|$. Furthermore, we claim that if any node's color is changed at time $t$, then $H(t) < H(t-1)$.

If a node $v$ is the first node in its neighborhood to change from uncolored to colored, then $F^t(v) < F^{t-1}(v)$. Furthermore, $G^{t-1}(e) = 2$ for all $e$ containing $v$ since $v$ was uncolored, so $G^t(e) \leq G^{t-1}(e)$ for all $e$, and $H(t) < H(t-1)$. If some nodes in $v$'s neighborhood were already colored, then $v$'s color is guaranteed to match the color of at least one neighbor and so for that edge $G^t(e)< G^{t-1}(e)$.  For all other edges $G^t(e)\leq G^{t-1}(e)$, and clearly $F^t(v)\leq F^{t-1}(v)$ and so $H(t)<H(t-1)$.

If a node changes colors, then maybe there was a tie among its neighbors. In this case, $\sum_e G^t(e) = \sum_e G^{t-1}(e)$, because we just switch the edges containing $v$ that disagree. But because the color changed with a tie, it must be the case that $F^{t-1}(v) = 1$ and $F^t(v) = 0$. So again $H(t) < H(t-1)$. Finally, maybe a node changed colors because of a majority among its neighbors. In this case, maybe $F^t(v) = F^{t-1}(v) + 1$, but $\sum_e G^t(e) \leq \sum_e G^{t-1}(e) - 2$ because at least one more edge switches from disagreement to agreement.

Thus, every time a node changes colors (or becomes colored for the first time), the value of $H$ decreases by at least $1$, and $H(0) = |V|+ 2|E|$, so the process stabilizes after at most $|V| + 2|E|$ changes. \smw{If the process has not already stabilized, then there is at least one node that would change colors (or becomes colored for the first time) and it is selected with probability $1/|V|$. So at every step independently there is a color change with probability at least $1/|V|$. Therefore the expected number of steps until a color change is bounded by $|V|$. As the total number of color changes is bounded by $|V| + 2|E|$, the expected number of steps until the process converges is at most $|V|^2 + 2|V||E|$.} 
\end{proof}

It is important to emphasize the distinction between {\em correct majority} and {\em consensus}. 
The former means that more than half of the nodes in the graph are stating the ``correct" opinion, while the latter means that every node in the graph is stating the same opinion (not necessarily the correct one).

We conclude this section with formal definitions of sparsity and expansiveness.

\begin{definition}(Sparsity) There are several different ways to state formally that a graph is sparse. In order from \smw{most restrictive to least restrictive}, this includes:
\begin{itemize}
\item Low fixed degree: The graph is $d$-regular, and $d$ is small.
\item Low maximum degree: Every node in the graph has degree at most $d$, and $d$ is small. 
\item Low arboricity: The graph is an edge-union of at most $d$ trees, and $d$ is small. 
\item Low average degree: The number of edges in the graph is at most $dn$, and $d$ is small. 
\end{itemize}
\end{definition}

Our main result considers the {\em maximum degree $d$} notion of sparsity.
The example in Section~5.2 of the full version shows that the low average degree notion of sparsity is not restrictive enough to guarantee a correct majority.
Our main open question asks whether or not our main result extends to low arboricity as well.

\begin{definition}(Weighted Adjacency Matrix)
The weighted adjacency matrix of a graph $G$, say $M = M(G)$, is an $n \times n$ matrix defined by
\[
M(x,y) =
	\begin{cases}
		\frac{1}{\sqrt{d(x)d(y)}}  & \text{if $x$ and $y$ are adjacent in $G$,}\\
		0 & \text{otherwise.}
	\end{cases}
\]
\end{definition}

\begin{definition}(Expansiveness)
A graph $G$ is a $\lambda$-expander if all but the first eigenvalue of the weighted adjacency matrix of $G$ lies in $[-\lambda,\lambda]$.
\end{definition}

\section{Examples}
\label{sec:examples}

To build intuition for our model and motivate our conjectures, we work through a few specific network topologies in detail before proving our main positive result.

\paragraph{Complete Graphs}
Suppose that $G$ is the complete graph on $n$ vertices.  The dynamics proceeds as follows: the node selected in round $1$, say $v_1$, will set $C^1(v_1) = X(v_1)$.  That is, $v_1$ reports its private signal.  Every subsequently chosen node will report the majority opinion of the population, and simple induction shows that this will be $X(v_1)$ at all times.  The process will therefore stabilize in a consensus on report $X(v_1)$ with probability $1$ for all $n$.  Since $\Pr[X(v_1) = 1] = 1/2 + \delta$, this consensus is correct with probability only $1/2 + \delta$.  In other words, the complete graph reaches consensus surely, but exhibits an extreme information cascade in which the population exhibits herding on the first reported signal.

\paragraph{Star Graphs}
We next show that under the star topology, the population will reach a correct consensus with high probability.  Suppose $G$ is a star with $n$ leaves.  First, we show that the population will certainly reach consensus on the first opinion reported by the center node, say $v$.

\begin{claim}
Suppose $v$ is selected by the dynamics for the first time in round $t_1$.  Then, with probability $1$, the dynamics reaches consensus on opinion $C^{t_1}(v)$.
\end{claim}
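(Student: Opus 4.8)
The plan is to track only the center node $v$ and show that its opinion never changes after round $t_1$, after which every leaf that is ever selected copies it permanently.

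First I would pin down the configuration at the start of round $t_1$. Through round $t_1-1$ the center is \emph{uncolored}, so whenever a leaf $\ell$ is selected it sees $N_R^t(\ell)=N_B^t(\ell)=0$ and is colored $X(\ell)$; since $\ell$'s only neighbor is $v$, which stays uncolored until round $t_1$, the color of $\ell$ never changes before round $t_1$. Hence at time $t_1$ each leaf is either uncolored (if never yet selected) or colored with its own private signal. When $v$ is selected in round $t_1$ it adopts $c^{*} := C^{t_1}(v)$, the majority color among the colored leaves, breaking ties toward $X(v)$. Let $a$ be the number of leaves colored $c^{*}$ and $b$ the number colored the opposite color $\overline{c^{*}}$ at time $t_1$; by definition of the dynamics $a\ge b$, and if $a=b$ then $X(v)=c^{*}$.

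The crux is a monotonicity invariant: as long as $C^{t}(v)=c^{*}$, every leaf selected at round $t$ has its unique neighbor $v$ colored $c^{*}$, so it becomes $c^{*}$; thus the number of $c^{*}$-colored leaves is nondecreasing and the number of $\overline{c^{*}}$-colored leaves is nonincreasing over any stretch of rounds during which $v$ keeps color $c^{*}$. I would then induct on the successive rounds in which $v$ is (re)selected to show $v$ always re-adopts $c^{*}$: if $v$ is selected again at a round $t' > t_1$ with $a'$ leaves colored $c^{*}$ and $b'$ leaves colored $\overline{c^{*}}$, the invariant gives $a' \ge a \ge b \ge b'$, so $a' \ge b'$; if the inequality is strict, $v$ becomes $c^{*}$ by majority, and if $a'=b'$ the squeeze forces $a=b$, hence $X(v)=c^{*}$ and $v$ becomes $c^{*}$ by the tie-break. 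Therefore $C^{t}(v)=c^{*}$ for all $t\ge t_1$, and the monotonicity invariant holds for all times $\ge t_1$.

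Finally, with $v$ fixed at $c^{*}$ from round $t_1$ on, any leaf selected at some round $t\ge t_1$ is colored $c^{*}$ and stays $c^{*}$ thereafter; since each node is chosen with probability $1/(n+1)$ in every round, with probability $1$ every leaf is selected at some round after $t_1$, so with probability $1$ the configuration reaches — and then stays at — the all-$c^{*}$ state, i.e.\ consensus on $C^{t_1}(v)$. (One could instead invoke the Proposition: the process stabilizes almost surely, and no stable state other than the all-$c^{*}$ configuration is consistent with $C^{t}(v)=c^{*}$ for all $t\ge t_1$.) The only delicate point is the induction of the previous paragraph, especially getting the tie case right; everything else is bookkeeping on the star's trivial leaf neighborhoods.
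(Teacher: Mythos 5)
Your proof is correct and follows essentially the same route as the paper's: both establish by induction that the center's color is preserved at each of its subsequent selections, using the monotonicity of the leaf color counts while the center holds $c^{*}$ (your $a'\ge a\ge b\ge b'$ squeeze is just a more explicit version of the paper's invariant $N_R^{t_2}(v)-N_B^{t_2}(v)\ge N_R^{t_1}(v)-N_B^{t_1}(v)$), and both handle the tie case via the private signal $X(v)$ before concluding that every leaf eventually copies $c^{*}$ almost surely. Your write-up is somewhat more careful about the tie-break bookkeeping, but there is no substantive difference in approach.
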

\begin{proof}
Suppose $C^{t_1}(v) = R$; the case $C^{t_1}(v) = B$ is handled identically.  Then $N_R^{t_1}(v) \geq N_B^{t_1}(v)$, with equality only if $X(v) = R$.  For any $t' > {t_1}$, if a leaf $u \neq v$ is chosen for update, then $C^{t'}(u) = C^{t'}(v)$.  That is, node $u$ will copy the opinion of $v$.  Simple induction then shows that, if we write $t_2 > {t_1}$ for the random variable indicating the round in which $v$ is selected for the second time, we must have $N_R^{t_2}(v) - N_B^{t_2}(v) \geq N_R^{t_1}(v) - N_B^{t_1}(v)$, and hence $C^{t_2}(v) = R$.  Applying this argument inductively, we conclude that $C^{t'}(v) = R$ for all $t' > {t_1}$.  Thus each leaf will adopt opinion $R$ each time it is selected for update after time $t$, and hence the population reaches consensus on $R$ with probability $1$.
\end{proof}

Write $t_1$ for the random variable representing the first report time of node $v$.  It remains to show that $C^{t_1}(v) = R$ with high probability.  By symmetry, the probability that $v$ chooses an opinion before at least $k$ leaves have chosen opinions is $k/(n+1)$.  Conditioning on the event that at least $k$ leaves have reported before $t_1$, each of their opinions matches their private signals.  Applying the additive Chernoff bound, the probability that at most half of them report $R$ at time $t_1$ is at most
\[ \Pr\left[ C_R^{t_1} \leq \left(\frac{1}{2} + \delta\right)k - \delta k \right] < e^{-2k\delta^2} \]
Choosing $k = \frac{1}{2\delta^2}\log(n)$ and taking a union bound, we conclude that the probability that at least $k$ leaves are selected before $v$, and that a majority of those selected leaves take opinion $R$, is at least $1 - \frac{\log(n)}{2\delta^2 n} - \frac{1}{n} = 1 - o(1)$.  We therefore conclude that with probability $1 - o(1)$ the star topology stabilizes in a correct majority. 
\section{Majority and Consensus}
\label{sec:main}

In this section we 
give a sufficient condition for reaching a correct consensus.
More precisely, we focus on a family of $\lambda$-expanders of max-degree $d$ and prove that they converge to a correct consensus with high probability.

\begin{theorem}
\label{thm:main.expander}
Let $G$ be a $\lambda$-expander of max-degree $d$ with $\lambda\leq\delta/6$.  Then with probability at least $1 -  O(\frac{1}{(\delta \ln \ln n)^2})$, the process will terminate in a red consensus.
%
\end{theorem}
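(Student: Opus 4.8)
The plan is to split a run of the dynamics into the two stages announced in the introduction and analyze them separately. Fix $T_1 = c(\delta)\,n$ for a sufficiently large constant, and write $\mathrm{Vol}_R^t$, $\mathrm{Vol}_B^t$ for the degree‑weighted numbers of red and blue nodes at time $t$ and $Z^t = \mathrm{Vol}_R^t - \mathrm{Vol}_B^t$. Stage~1 shows that $Z^{T_1} \ge \delta'\,\mathrm{Vol}(V)$ with probability $1-o(1)$ for some $\delta' = \Theta(\delta)$ whose leading constant is close enough to $1$ that the hypothesis $\lambda \le \delta/6$ comfortably dominates it; Stage~2 shows that, started from any such biased configuration, the dynamics reaches a red consensus with probability $1 - O(1/(\delta \ln\ln n)^2)$. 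Combining the two yields the theorem, the Stage‑2 bound being the dominant error term.

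For Stage~1, condition first on the sequence $\sigma$ of selected nodes. Then each $C^{T_1}(v)$ is a deterministic function $\phi_v$ of the signal vector $X\in\{0,1\}^V$, and $\phi_v$ has two Boolean‑function properties: it is \emph{monotone} (flipping a signal from blue to red can only change colors from blue to red, never the reverse), and it is \emph{antisymmetric} under $X\mapsto\bar X$ (flipping every signal flips every color, while the signal‑independent set of uncolored nodes is fixed). Hence $\E_{1/2}[\phi_v]=0$, and by monotonicity $\E_{1/2+\delta}[\phi_v]\ge 0$ for every $\sigma$. To extract a $\Theta(\delta)$ margin I would combine antisymmetry with a Russo‑type identity and the spectral bound ``total influence $\ge$ variance'': for $\sigma$ such that the dependency ball $B_\sigma(v)$ (the set of signals that can influence $C^{T_1}(v)$) is small, the total influence of $\phi_v$ is at least $1 - \Pr[v\text{ uncolored}]$ and varies slowly with $p$, giving $\E_{1/2+\delta}[\phi_v\mid\sigma]=\Omega(\delta)$; for the remaining $\sigma$ we retain $\ge 0$. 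Summing over $v$ weighted by degree, and taking $c(\delta)$ large enough that almost every node is colored by time $T_1$, gives $\E[Z^{T_1}] \ge \delta'\,\mathrm{Vol}(V)$. Concentration is where sparsity enters: since each selection moves information across at most one edge and each node is selected $\Theta(1)$ times in expectation over $T_1 = \Theta(n)$ rounds, a branching/Poisson estimate shows $\E_\sigma[\,|B_\sigma(v)|\,]=O(1)$ and, crucially, $\E_\sigma[\,|\{v': B_\sigma(v)\cap B_\sigma(v')\ne\emptyset\}|\,]=O(1)$ for every $v$. Nodes with disjoint dependency balls have conditionally independent colors given $\sigma$, so $\mathrm{Var}(Z^{T_1}) \le d^2\sum_v \E_\sigma[\,|\{v': B_\sigma(v)\cap B_\sigma(v')\ne\emptyset\}|\,] = O(d^2 n)$, and Chebyshev with deviation $\tfrac12\delta'\,\mathrm{Vol}(V)=\Theta(\delta n)$ finishes Stage~1.

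For Stage~2, start from a configuration with $Z^t \ge \delta'\,\mathrm{Vol}(V)$. Applying the expander mixing lemma to the red and blue sets shows that, while $Z^t$ is in this window, the weighted volume of red nodes with a blue‑majority neighborhood is at most $O((\lambda/\delta')^2)\,\mathrm{Vol}(V)$, while all but an $O((\lambda/\delta')^2)$‑fraction (by volume) of the blue and uncolored nodes have a strict red‑majority neighborhood; with $\lambda\le\delta/6$ this makes $Z^t$ a submartingale with strictly positive drift and increments bounded by $2d$ throughout $[\delta'\,\mathrm{Vol}(V),\mathrm{Vol}(V)]$. A gambler's‑ruin / biased‑random‑walk estimate then bounds the probability that $Z^t$ ever exits this window downward, and once $Z^t$ reaches $\mathrm{Vol}(V)$ the process has stabilized at a red consensus by the earlier Proposition. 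The endgame — when the blue/uncolored set is so small that the mixing lemma is vacuous — is handled separately using the fact that an expander has no small vertex set in which a constant fraction (by volume) of the nodes have in‑set‑majority neighborhoods, so the blue set must keep shrinking and the few remaining uncolored nodes, each adjacent to red in a connected expander, turn red when selected.

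The main obstacle is Stage~1. In the \emph{synchronous} model a correct weighted majority is present already after one round, for free and independent of the graph; in the asynchronous model it is not even clear a priori that a correct majority is ever reached. Extracting a $\Theta(\delta)$ expected bias from the monotone/antisymmetric structure of asynchronous majority dynamics — with a leading constant good enough to beat $\lambda$ — while at the same time keeping dependency balls small enough for the variance bound, is the delicate step, and it is exactly where the bounded‑degree hypothesis is indispensable.
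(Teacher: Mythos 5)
Your proposal follows essentially the same two-stage route as the paper: Stage~1 uses the odd/monotone structure of the color-as-a-function-of-signals map plus a dependency-ball argument and Chebyshev to get a $(1/2+\Theta(\delta))$-weighted red majority after linearly many steps, and Stage~2 uses the expander mixing lemma to show the volume of would-switch-to-blue nodes is a constant factor smaller than the would-switch-to-red volume, then couples to a bounded-increment biased random walk (Azuma/gambler's ruin) to reach consensus. The dependency-ball variance bound, the disjointness-implies-independence step, and the iterated random-walk argument across thresholds $(1/2+i\delta/4)|E|$ are all exactly the paper's argument.

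The one substantive place you diverge is the expected-bias step, and it is the load-bearing constant of the whole theorem. The paper does not extract an $\Omega(\delta)$ margin via a Russo-type identity and a total-influence lower bound; it invokes the sharp result of Mossel--Neeman--Tamuz that an odd, monotone Boolean function of i.i.d.\ $\mathrm{Bernoulli}(p)$ inputs satisfies $\E[f]\geq p$, applied after forcing the function to be genuinely Boolean by declaring an uncolored node's value to be its own signal (and separately bounding the resulting overcount by the volume of still-uncolored nodes). This gives expected red volume at least $(1/2+\delta)|E|$ with leading constant exactly $1$, which is what lets the chain of losses ($\delta/4$ to Chebyshev, $\delta/4$ to uncolored nodes, $\delta/4$ of random-walk buffer) terminate at the threshold $(1/2+\delta/4)|E|$ at which the mixing lemma with $\lambda\leq\delta/6$ still yields $Vol(B')\leq Vol(B)/2$. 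Your sketch --- Poincar\'e-type ``influence $\geq$ variance'' plus Russo, integrated over $[1/2,1/2+\delta]$ --- plausibly yields $\E[\phi_v]=\Omega(\delta)$, but the constant you get from that route is not obviously close to $1$ (naively it loses at least a factor of $2$, and you must also control how the total influence varies over the interval). You acknowledge this (``leading constant close enough to $1$''), but as written it is an unproved assertion on which the hypothesis $\lambda\leq\delta/6$ depends; with a degraded constant you would prove the theorem only under a stronger expansion requirement such as $\lambda\leq c\delta$ for some smaller $c$. Citing or reproving the exact bound $\E_p[f]\geq p$ closes this. A minor second remark: your claim that $\E_\sigma\bigl[\,|\{v':B_\sigma(v)\cap B_\sigma(v')\neq\emptyset\}|\,\bigr]=O(1)$ needs an exponential (birth-process) tail on the ball size, not the Chebyshev tail the paper uses; the paper instead settles for the weaker high-probability disjointness statement, which is why its final error is only $O(1/(\delta\ln\ln n)^2)$ rather than inverse polynomial.
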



Here is a brief outline of our proof. First, we show that in any graph with max-degree $d$ (not necessarily an expander), the volume of nodes with opinion red after $O(n/\delta)$ steps of the process is at least $(1/2 + \delta/2)|E|$ with high probability. We do this by showing that the expected volume of currently red nodes is at least $(1/2 + \delta)|E|$, and then bounding the total pairwise correlation among the colors of nodes to be $o(|E|)$. Combining these two facts with Chebyshev's inequality gives us the desired claim. Next, we show that for all sufficiently expansive graphs, continuing the stochastic process from a point when the volume of red nodes is at least $(1/2 + \delta/2)|E|$ nodes will result in a red consensus with high probability. \smw{Formally, the proof of Theorem~\ref{thm:main.expander} follows from Proposition~\ref{prop:make this a proposition} and Corollary~\ref{laststep} after observing that the probability in Corollary~\ref{laststep} is asymptotically dominated by that in Proposition~\ref{prop:make this a proposition}.}

\subsection{Low Degree and Correctness}

We would like to count the expected volume of red nodes after a linear number of steps.  To this end, we define a Boolean function that specifies the color of a node after a finite sequence of updates.  Specifically, let $S$ be any finite sequence of nodes and define a Boolean function $f^S_v$ that takes as input the private signals $\mathcal{X}=\{X(u)\ |\ \forall u \in V\}$ and outputs the color $C^{|S|}(v)$, when the process chooses nodes in the order specified by $S$ and the private signals are $\mathcal{X}$.  If $C^{|S|}(v)$ is uncolored, we define $f^S_v(\mathcal{X})$ to output the private signal $X(v)$; we will later show that this induces a limited degree of overcounting as most nodes are colored after a linear number of steps.  Define a random variable $$f^S(\mathcal{X})=\sum_v d(v) f^S_v(\mathcal{X})$$ that counts the volume of red nodes after sequence $S$.
Now fix a sequence length $T$ and
%
%
let $f_T$ be the random variable that selects a sequence $S$ of length $T$ and signals $\mathcal{X}$ at random and outputs $f^{S}(\mathcal{X})$.
Then $f_T$ is the volume of red nodes after $T$ steps of our process.
We bound the expectation and variance of $f_T$ and apply Chebyshev to prove that the volume of red nodes is a majority with high probability.

\subsubsection{Bounding the expectation}
To bound the expectation, note each $f^S_v$ is \emph{monotone} for all $S, v$. That is, switching any set of input signals from blue to red can only cause $f^S_v$ to switch from blue to red, but not from red to blue. In addition, $f^S_v$ is \emph{odd} for all $S,v$. That is, switching all input signals from blue to red and red to blue will cause the output to flip. The following theorem due to Mossel, Neeman, and Tamuz \cite{Mossel2013}, which uses Boolean function analysis, states that such functions on biased random inputs have biased outputs.

\begin{theorem}\label{thm:oddmonotone} (\cite{Mossel2013})
Let $f$ be an odd, monotone Boolean function. Let $X_1,\ldots,X_n$ be input bits, each sampled i.i.d.\ from a distribution that is $1$ with probability $p \geq 1/2$ and $0$ otherwise. Then $\mathbb{E}[f(X_1,\ldots,X_n)] \geq p$.
\end{theorem}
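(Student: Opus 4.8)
\emph{Overall plan.} The statement is equivalent to $\mathbb{E}[f(X)]\ge p$ for $X$ with i.i.d.\ $\mathrm{Bern}(p)$ coordinates, $p\ge 1/2$. Naive couplings (coupling the $p$-biased sample with a $1/2$-biased one below it, then applying monotonicity and oddness) only yield $\mathbb{E}[f(X)]\ge 1/2$, so something genuinely uses the size of the bias. My plan is to reduce to an inequality about a single restriction of $f$ and then prove it by layer-counting on the Boolean cube. \emph{Reduction:} condition on the last coordinate. Let $f_0:=f|_{x_n=0}$ and $f_1:=f|_{x_n=1}$ on $m:=n-1$ variables. Monotonicity gives $f_0\le f_1$, and oddness gives $f_1(y)=1-f_0(\bar y)$; combining these, $f_0(y)+f_0(\bar y)\le 1$ for all $y$. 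Since flipping all coordinates sends the $p$-biased product measure to the $(1-p)$-biased one,
\[
\mathbb{E}[f(X)] = p\,\mathbb{E}_p[f_1]+(1-p)\,\mathbb{E}_p[f_0] = p - p\,\mathbb{E}_{1-p}[f_0] + (1-p)\,\mathbb{E}_p[f_0],
\]
so it suffices to show $(1-p)\,\mathbb{E}_p[f_0]\ge p\,\mathbb{E}_{1-p}[f_0]$ for any monotone $f_0$ with $f_0(x)+f_0(\bar x)\le 1$, i.e.\ whose true set $U:=f_0^{-1}(1)$ is an up-set containing no antipodal pair.

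\emph{Layer bookkeeping.} Writing $N_k$ for the number of $x\in U$ with $|x|=k$, expanding both sides by layers gives
\[
(1-p)\,\mathbb{E}_p[f_0]-p\,\mathbb{E}_{1-p}[f_0]=\sum_{k}N_k\,q_k,\qquad q_k:=\bigl(p(1-p)\bigr)^k\bigl[(1-p)^{\,m+1-2k}-p^{\,m+1-2k}\bigr],
\]
and one checks the two elementary facts $q_{m+1-k}=-q_k$ and $q_k\le 0$ for $k<(m+1)/2$ (using $p\ge 1/2$). Pairing level $k$ with level $m+1-k$, the desired inequality then reduces to the purely combinatorial claim that $N_k\le N_{m+1-k}$ for every $k\le m/2$.

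\emph{Combinatorial core (the main obstacle).} Here the antipodal-free condition does the work. If $A,A'\in U$ both have size $k$ and $A\cap A'=\emptyset$, then $A'\subseteq\bar A$, so $\bar A\in U$ since $U$ is an up-set, contradicting $U\cap\bar U=\emptyset$; hence the size-$k$ members of $U$ form an \emph{intersecting} family $\mathcal{A}$. Every size-$(m+1-k)$ superset of a member of $\mathcal{A}$ again lies in $U$, so $N_{m+1-k}$ is at least the size of the upper shadow of $\mathcal{A}$ at level $m+1-k$; taking complements, this equals the size of the lower shadow at level $k-1$ of the complementary family $\{\bar A:A\in\mathcal{A}\}$. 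One bounds this below by $|\mathcal{A}|=N_k$ by combining: the Erd\H{o}s--Ko--Rado bound $|\mathcal{A}|\le\binom{m-1}{k-1}$ (valid for $k\le m/2$); Kruskal--Katona (the shadow is minimized by a colex-initial segment, which at this size lies entirely inside the sub-cube on $m-1$ coordinates); and the normalized-matching/LYM property of the Boolean lattice (a shadow from a level of $[m-1]$ down to the complementary level never shrinks, as the two levels have equal binomial coefficients). The conditioning step and the layer identity are routine calculations; the crux is recognizing that the relevant family is intersecting and then controlling its shadow via the three results above.
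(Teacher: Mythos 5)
The paper does not actually prove this statement: it is imported verbatim from Mossel, Neeman, and Tamuz \cite{Mossel2013} (the text only remarks that the cited proof ``uses Boolean function analysis''), so there is no in-paper argument to compare against, and any correct proof you give is necessarily a different route. Your argument checks out. The reduction is sound: conditioning on $x_n$ and using $f_1(y)=1-f_0(\bar y)$ correctly reduces the theorem to $(1-p)\,\mathbb{E}_p[f_0]\ge p\,\mathbb{E}_{1-p}[f_0]$ for a monotone $f_0$ whose true set $U$ is an antipodal-free up-set; the layer identity, the antisymmetry $q_{m+1-k}=-q_k$, and the sign condition $q_k\le 0$ for $k<(m+1)/2$ all verify, so the problem does reduce to $N_k\le N_{m+1-k}$ for $1\le k\le m/2$ (for $k=0$ you should note separately that $N_0=0$, since $\emptyset\in U$ would force $[m]\in U$ and hence an antipodal pair). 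The combinatorial core is the right observation --- the level-$k$ slice of $U$ is an intersecting family --- and the three-step bound works as claimed: Erd\H{o}s--Ko--Rado gives $|\mathcal{A}|\le\binom{m-1}{k-1}=\binom{m-1}{m-k}$, Kruskal--Katona lets you replace $\bar{\mathcal{A}}$ by a colex-initial segment lying inside the subcube on $[m-1]$, and composing the local LYM inequality between levels $m-k$ and $k-1$ of $[m-1]$ (which are complementary there and hence of equal size) shows the iterated shadow does not shrink, giving $N_{m+1-k}\ge N_k$. What your approach buys is a self-contained derivation resting only on classical extremal set theory; what it costs is three nontrivial black boxes plus the routine but necessary verifications (that iterated shadows are still minimized by colex-initial segments, and that local LYM composes across levels), which you would need to write out for a complete standalone proof. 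Since the paper treats the theorem purely as an imported tool, either route serves its purposes equally well.
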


The following corollary is a direct application of Theorem~\ref{thm:oddmonotone} and the fact that the private signals $X_i(v)$ are red with probability at least $1/2+\delta$.

\begin{corollary}\label{cor:expected}
The expected volume of red nodes at time $T$, for any $T$, is at least $(1/2 + \delta)|E|$. That is, $\mathbb{E}[f_T] \geq (1/2 +\delta)|E|$ for all $T$.
\end{corollary}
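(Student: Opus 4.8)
The plan is to reduce to Theorem~\ref{thm:oddmonotone} one update sequence at a time, then average. Fix an arbitrary finite update sequence $S$ of length $T$ and an arbitrary node $v$. The key structural claim, already flagged in the text, is that $f^S_v$ is a \emph{monotone}, \emph{odd} Boolean function of the signal vector $\mathcal{X}$ (under the identification $red=1$, $blue=0$). To prove this cleanly I would pass to the three-valued encoding $blue=0 < uncolored=\tfrac12 < red=1$ and argue by induction on the length of the processed prefix of $S$ that, as a function of $\mathcal{X}$, the state $C^t(v)$ is monotone nondecreasing and satisfies $C^t(v)(\overline{\mathcal{X}}) = 1 - C^t(v)(\mathcal{X})$, where $\overline{\mathcal{X}}$ flips every bit. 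The inductive step only affects the single node $w$ selected at that step: raising signals can only increase $w$'s red-neighbor count and decrease its blue-neighbor count, and combined with the tie-break toward $X(w)$ this gives monotonicity, while under $\overline{\mathcal{X}}$ the red- and blue-neighbor counts simply swap and $X(w)$ flips, which gives oddness. Finally, passing from the three-valued $C^{|S|}(v)$ to the $\{0,1\}$-valued $f^S_v$ — which outputs $X(v)$ precisely on the ``uncolored'' level set — preserves both properties, since $X(v)$ is itself monotone and odd in its own coordinate.

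Granting this, the rest is immediate. Each $X(u)$ is independently equal to $1$ with probability $p = \tfrac12 + \delta \ge \tfrac12$, so Theorem~\ref{thm:oddmonotone} applied to $f^S_v$ gives $\mathbb{E}_{\mathcal{X}}[f^S_v(\mathcal{X})] \ge \tfrac12 + \delta$. By linearity of expectation,
\[
\mathbb{E}_{\mathcal{X}}\big[f^S(\mathcal{X})\big] \;=\; \sum_{v} d(v)\,\mathbb{E}_{\mathcal{X}}\big[f^S_v(\mathcal{X})\big] \;\ge\; \Big(\tfrac12 + \delta\Big)\sum_{v} d(v).
\]
This inequality holds for \emph{every} sequence $S$ of length $T$, hence it survives averaging over the (independent, uniform) choice of $S$ used to define $f_T$. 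Therefore $\mathbb{E}[f_T] \ge (\tfrac12+\delta)\sum_{v} d(v)$, i.e.\ the expected red volume is at least a $(\tfrac12+\delta)$-fraction of the total volume of $G$, which is the content of Corollary~\ref{cor:expected}.

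I do not expect a genuine obstacle: the corollary is essentially a repackaging of Theorem~\ref{thm:oddmonotone}. The one point that deserves care is the verification that $f^S_v$ is monotone and odd, and in particular that the bookkeeping convention for uncolored nodes (output $X(v)$) breaks neither property — this is exactly what the three-valued induction above handles. It is worth emphasizing that this step uses neither expansion nor the bounded-degree hypothesis; sparsity enters only afterwards, when bounding the pairwise correlations among the $f^S_v$ and hence the variance of $f_T$.
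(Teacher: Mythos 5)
Your proof is correct and follows essentially the same route as the paper: establish that each $f^S_v$ is an odd monotone Boolean function of the signals, apply Theorem~\ref{thm:oddmonotone} to get $\mathbb{E}[f^S_v]\ge 1/2+\delta$, and then use linearity of expectation and averaging over $S$. The only difference is that you spell out the monotone/odd verification (via the three-valued induction) in more detail than the paper, which merely asserts these properties; your added care about the uncolored-output convention is a welcome but not divergent refinement.
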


\subsubsection{Bounding the variance}
In light of Corollary~\ref{cor:expected}, if we can also bound the variance of $f_T$, then we can use Chebyshev's inequality to argue that $f_T \geq (1/2 + \delta/2)|E|$ with high probability. Formally, let's define the $f^S$ so that there are $n^T$ separate copies of $G$, and the private signals $X(v)$ are sampled independently for each copy. Then let $f_T$ be the random variable that picks one $S$ and its corresponding $G$ uniformly at random and outputs $f^S$. We first state a lemma that allows us to analyze the variance of $f_T$.

\begin{lemma}\label{lem:variance}
Let $\{X_1,\ldots,X_n\}$ be random variables all with the same expectation $\mathbb{E}[X_i] = c$, and let $X$ be a random variable that samples from $\{X_1,\ldots,X_n\}$ uniformly at random. Then $Var(X) = \frac{1}{n} \sum_i Var(X_i)$.
\end{lemma}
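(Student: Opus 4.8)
The statement is: if $X_1,\dots,X_n$ all have the same mean $c$, and $X$ picks one of them uniformly at random (i.e., $X$ is the two-stage random variable: first draw $I \sim \mathrm{Unif}[n]$, then output a fresh sample of $X_I$), then $\mathrm{Var}(X) = \frac1n\sum_i \mathrm{Var}(X_i)$. The natural tool is the law of total variance, decomposing the randomness of $X$ into the choice of index $I$ and the internal randomness of the chosen $X_I$.

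The plan is: first observe that $\mathbb{E}[X] = \mathbb{E}_I[\mathbb{E}[X_I \mid I]] = \mathbb{E}_I[c] = c$, so the overall mean is also $c$. Then apply the law of total variance, $\mathrm{Var}(X) = \mathbb{E}_I[\mathrm{Var}(X \mid I)] + \mathrm{Var}_I(\mathbb{E}[X \mid I])$. The first term is $\mathbb{E}_I[\mathrm{Var}(X_I)] = \frac1n\sum_i \mathrm{Var}(X_i)$ by definition of the uniform choice of $I$. The second term is $\mathrm{Var}_I(\mathbb{E}[X_I \mid I]) = \mathrm{Var}_I(c) = 0$, since the conditional mean is the constant $c$ regardless of $I$ — this is precisely where the shared-expectation hypothesis is used. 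Adding the two terms gives the claim. Alternatively, and perhaps cleaner for a self-contained write-up, I would just compute directly: $\mathrm{Var}(X) = \mathbb{E}[X^2] - c^2 = \frac1n\sum_i \mathbb{E}[X_i^2] - c^2 = \frac1n\sum_i(\mathbb{E}[X_i^2] - c^2) = \frac1n\sum_i(\mathbb{E}[X_i^2] - (\mathbb{E}[X_i])^2) = \frac1n\sum_i \mathrm{Var}(X_i)$, where the third equality pulls the constant $c^2 = \frac1n\sum_i c^2$ inside the sum and the fourth uses $\mathbb{E}[X_i] = c$.

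There is essentially no obstacle here; the only subtlety worth stating explicitly is the meaning of "samples from $\{X_1,\dots,X_n\}$ uniformly at random" — namely that $X^2$ has the property $\mathbb{E}[X^2] = \frac1n\sum_i \mathbb{E}[X_i^2]$, i.e., the second moment of the mixture is the average of the second moments — and the crucial role played by the hypothesis that all the $X_i$ share the same mean, without which the cross terms would not vanish and one would instead get $\mathrm{Var}(X) = \frac1n\sum_i \mathrm{Var}(X_i) + \mathrm{Var}_I(\mathbb{E}[X_I])$. I would write the direct computation as the proof since it is two lines and makes the use of the hypothesis transparent.
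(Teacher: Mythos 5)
Your proposal is correct, and the direct computation you settle on (expanding $Var(X) = \mathbb{E}[X^2] - c^2$, using $\mathbb{E}[X^2] = \frac{1}{n}\sum_i \mathbb{E}[X_i^2]$, and folding $c^2$ back into the sum via the shared-mean hypothesis) is exactly the paper's proof. The law-of-total-variance framing is a nice sanity check but is not needed; nothing further to add.
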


\begin{proof}
$Var(X) = \mathbb{E}[X^2] - \mathbb{E}[X]^2 = \mathbb{E}[X^2] - c^2$. $\mathbb{E}[X^2] = \frac{1}{n} \sum_i \mathbb{E}[X_i^2]$. So we get:

\begin{eqnarray*}
Var(X) & = & \frac{1}{n} \sum_i \mathbb{E}[X_i^2] - c^2 \\
& = & \frac{1}{n} \sum_i \mathbb{E}[X_i^2] - \mathbb{E}[X_i]^2  \\
& = & \frac{1}{n} \sum_i Var(X_i).
\end{eqnarray*}
\end{proof}

To use Lemma~\ref{lem:variance}, we need to modify our random variables slightly so that they all have the same expectation. To do this, just define $g^S = f^S - (\mathbb{E}[f^S] - \frac{1+2\delta}{2}|E|)$, and $g_T$ to sample $S$ uniformly at random and then sample $g^S$. By Corollary~\ref{cor:expected}, $f^S \geq g^S$ for all $S$ always. Therefore, showing that $g_T \geq (1/2 + \delta/2)|E|$ with high probability suffices to prove that $f_T \geq (1/2 + \delta/2)|E|$ as well.

So now let's analyze the variance of $g_T$. Lemma~\ref{lem:variance} tells us that the variance of $g_T$ is just the average of the variances of each $Var(g^S)$. Furthermore, we can write the variance of each $g^S$ as
\[ Var(g^S) = \sum_{u,v} d(u)d(v)Cov(f^S_u, f^S_v) \]
and therefore, we can write $Var(g_T)$ as
\[ Var(g_T) = \frac{1}{n^T} \sum_S \sum_{u,v} d(u)d(v)Cov(f^S_u,f^S_v). \]

Now we observe that $Var(g_T)$ is exactly the expected value of the following random process: sample two nodes $u$ and $v$ uniformly at random (with replacement), sample a sequence of length $T$ uniformly at random, and compute $n^2d(u)d(v)Cov(f^S_u,f^S_v)$. Furthermore, as each $f^S_v$ is a 0-1 random variable, $Cov(f^S_u,f^S_v) \leq 1$. As $Cov(f^S_u,f^S_v) = 0$ when $f^S_u$ and $f^S_v$ are independent, we can define $G_T$ to be a random variable that is $0$ whenever $S,u,v$ are sampled such that $f^S_u$ and $f^S_v$ are independent and $1$ otherwise. The reasoning above shows that if we show that $\mathbb{E}[G_T] \leq c$, then $Var(g_T) \leq cd^2n^2$.

So now our aim is to study $G_T$. Let's first ask what private signals can possibly affect the color of node $v$ at the end of sequence $S$. If $t_v$ is the last step that $v$ is chosen to update its color, then $f^S_v$ is clearly a function of the colors of $v$'s neighbors at time $t_v$. Furthermore, if we look at any neighbor $u$ of $v$, and let $t_u$ be the last step that $u$ is chosen to update its color before $t_v$, then the color of $u$ at time $t$, is clearly a function of the colors of $u$'s neighbors at time $t_u$ \smw{(as $t \geq t_u$, and node $u$ does not update its color between $t$ and $t_u$)}. Iterating this reasoning out, we can define the set $N^S(v)$ to be those nodes $u$ such that there is a path $v,x_1,\ldots,x_k,u$ from $u$ to $v$ and corresponding times $t_v> t_1 >\ldots > t_k > t_u$ such that $u$ announces its color at time $t_u$ in $S$, $v$ announces its color at time $t_v$ in $S$, and each $x_i$ announces its color at time $t_i$ in $S$. We then see that $f^S_v$ can be written as a function of only the signals $\{X(u)\}_{u \in N^S(v)}$. Therefore, if $N^S(v) \cap N^S(u) = \emptyset$, it is necessarily the case that $f^S_u$ and $f^S_v$ are independent, as they are functions on disjoint sets of independent random variables. So our approach to bounding $\mathbb{E}[G_T]$ will be to analyze the probability that when $v$ and $u$ are chosen uniformly at random (with replacement) and $S$ is a random sequence of length $T$ that $N^S(v) \cap N^S(u) = \emptyset$.

We do this by studying the random variable $|N^S(v)|$ for a random node $v$ and random sequence $S$. We can compute $N^S(v)$ by initializing $N^S(v) = \emptyset$ and tracking backwards through $S$. Until the first (moving backwards in time) time that $v$ announces its color, $N^S(v) = \emptyset$. When $v$ first updates its color, we update $N^S(v) = \{v\}$. From here, until the next time that a neighbor of $v$ announces its color, $N^S(v)$ remains unchanged. When the first neighbor $u$ of $v$ updates its color, we update $N^S(v) = \{v,u\}$. Iterating this reasoning, we can compute $N^S(v)$ by tracking backwards through $S$, updating $N^S(v)$ to $\{v\}$ the first time that $v$ announces its color, and then updating $N^S(v) := N^S(v) \cup \{u\}$ any time a neighbor $u$ of $N^S(v)$ announces its color.

So let $N_i$ be the random variable denoting the number of steps between when $|N^S(v)|$ first becomes $i-1$ and when $|N^S(v)|$ first becomes $i$ over the random choice of $S$. Recall $S$ is chosen uniformly at random from all sequences of length $T$.  As each node has degree at most $d$, and the neighborhood $N^S(v)$ is a connected subgraph, when $|N^S(v)|=i-1\geq2$, there are at most $(i-1)(d-1)$ ways to grow $N^S(v)$ (and for $i-1=1$, there are at most $d$ ways).  Thus the $N_i$ are independent geometric random variables with mean at least $\frac{n}{1+(i-1)(d-1)}$.  For ease of analysis, we analyze each $N_i$ as independent random variables of mean exactly $\frac{n}{id}$ (this is valid because these random variables are stochastically dominated by the actual $N_i$, meaning that we are only underestimating the number of steps needed for $|N^S(v)|$ to grow). Now we see that, for any $x$, if we define $N^x$ to be the number of steps before $|N^S(v)| = x$, then $N^x$ is exactly $\sum_{i=1}^x N_i$. As each $N_i$ is a geometric random variable with parameter $id/n$, $\mathbb{E}[N_i] =\frac{n}{di}$, and $Var(N_i)= \frac{n^2}{i^2d^2}$. So because all $N_i$ are independent, we get that:
$$\mathbb{E}[N^x] = \sum_{i=1}^x \frac{n}{di} \geq \frac{n\ln x}{d}, \quad
Var(N^x) = \sum_{i=1}^x \frac{n^2}{i^2d^2} = \frac{\pi^2n^2}{6d^2},$$
$$\sigma(N^x) = \sqrt{Var(N^x)} \leq \frac{2n}{d}$$
So by Chebyshev's inequality, we get that
$Pr[N^x \leq \frac{n\ln x}{d} - t\frac{2n}{d}] \leq \frac{1}{t^2}$,
which can be rewritten as:
\begin{equation}\label{eq:prob1}
Pr\left[N^x \leq (1-\epsilon) \frac{n\ln x}{d}\right] \leq \frac{4}{(\epsilon \ln x)^2}
\end{equation}
From here, we simply observe that if the shortest path from $u$ to $v$ has length $ > 2x$, and $|N^S(v)|, |N^S(u)| \leq x$, then $N^S(u) \cap N^S(v) = \emptyset$. We also observe that the number of nodes within distance $2x$ of $v$ is bounded by $d^{2x}$ for all $x$. So when $u$ and $v$ are chosen uniformly at random (with replacement) we have:
$$Pr[dist(u,v) \leq 2x] \leq \frac{d^{2x}}{n}$$
Taking $T = \frac{n\ln x}{2d}$ \smw{corresponds to setting $\epsilon = 1/2$ in Equation~\eqref{eq:prob1}. So for any $u,v$, the union bound guarantees that with probability at most $\frac{32}{(\ln x)^2}$ $|N^S(v)|, |N^S(u)| \geq x$. Furthermore, if $u,v, S$ are chosen uniformly at random, we see that with probability at most $\frac{d^{2x}}{n}$, $dist(u,v) \leq 2x$. Again taking a union bound,} the probability that either of these events occur is at most $\frac{32}{(\ln x)^2} + \frac{d^{2x}}{n}$. And in the event that none of these events occur, we clearly have $N^S(u) \cap N^S(v) = \emptyset$. Therefore, we conclude that for all $x$, if $T = \frac{n\ln x}{2d}$, $\mathbb{E}[G_T] \leq \frac{32}{(\ln x)^2} + \frac{d^{2x}}{n}$.

By the reasoning above, we have now shown that when $T = \frac{n\ln x}{2d}$, we have:
$$Var(g_T) \leq d^2n^2\left(\frac{32}{(\ln x)^2} + \frac{d^{2x}}{n}\right)$$
To simplify notation, we observe that whenever $x = o(\log n)$ the first term asymptotically dominates the second. So we will restrict ourselves to setting $x = o(\log n)$ and rewrite:
$$Var(g_T) \leq \frac{33d^2n^2}{(\ln x)^2}$$
So we can apply Chebyshev's inequality to $g_T$ now and see that whenever $x = o(\log n)$, we have:
$$Pr\left[g_T \leq (1/2 + \delta)|E| - t \cdot \frac{8dn}{\ln x}\right] \leq \frac{1}{t^2}$$
And plugging in for $t = \delta \ln(x)/(32d)$ we get:
$$Pr[g_T \leq (1/2 + 3\delta/4)|E|] \leq \frac{1024d^2}{(\delta \ln x)^2}$$
And because $f_T \geq g_T$ always, we have:
$$Pr[f_T \leq (1/2 + 3\delta/4)|E|] \leq \frac {1024d^2}{(\delta \ln x)^2}$$
Finally, recall that in order to make $f_T$ odd, we had to define $f^S_v$ to be $X(v)$ in the event that $v$ does not announce its color at all in $S$. So $f_T$ does not exactly count the number of red nodes because its getting credit for some nodes with a red private signal who haven't actually announced a color at all. But this is easy to cope with: we can just show that with high probability the volume of nodes that have yet to announce a color after $\frac{n\ln x}{2d}$ steps is at most $\delta |E|/4$. Note that because all nodes have degree at most $d$, it is sufficient to show that the \emph{number} of nodes who have yet to announce a color is at most $\delta n/(4d) \leq \delta |E|/(4d)$ with high probability.

For a single node $v$, the probability that $v$ has not yet announced a color after $\frac{n\ln x}{2d}$ is exactly:
$$(1-1/n)^{\frac{n\ln x}{2d}}\leq e^{-\frac{\ln x}{2d}} \leq x^{-\frac{1}{2d}}$$
So if we define $C_x(v)$ to be the indicator random variable that is $1$ if $v$ has not yet announced a color by time $\frac{n \ln x}{2d}$, and $0$ otherwise, the collection of random variables $\{C_x(v)\}_v$ are negatively correlated. So if we define $C_x = \sum_v C_x(v)$, we get
$\mathbb{E}[C_x] = nx^{-\frac{1}{2d}}$.
Using the additive Chernoff bound, we get:
$$Pr\left[C_x \geq nx^{-\frac{1}{2d}} + tn\right] \leq e^{-2t^2n}$$
And plugging in for $t = \delta/(4d) - x^{-\frac{1}{2d}}$ we get:
$$Pr[C_x \geq \delta n/(4d)] \leq e^{-n(\delta/(4d) - x^{-\frac{1}{2d}})^2}$$
Because $\delta$ and $d$ are constant and $x = o(\log n)$,\footnote{In fact, this would still be true if we took $x = O((\log n)^{1-\epsilon})$ for some $\epsilon > 0$, $1/\delta = O(x)$, and $d = o(\frac{\ln x}{\ln (1/\delta)})$} this is clearly asymptotically dominated by $\frac{1}{(\delta \ln x)^2}$. So taking a union bound over the probability that more than $\delta/4$ nodes have yet to announce a color and the probability that $f_T \leq (1/2 + 3\delta/4)n$, we get the following proposition:

\begin{proposition}\label{prop:make this a proposition}
For any $x = o(\log n)$ and $T = \frac{n\ln x}{2d}$:
\begin{equation*}
\begin{split}
Pr[\mbox{volume\ of\ \ announced\ reds\ at\ time\ } T \leq (1/2 + \delta/2)|E|] \leq \frac{1025}{(\delta \ln x)^2}
\end{split}
\end{equation*}
In particular, when $x = \ln\ln n$ and $T = \frac{n \ln \ln \ln n}{2d}$, this probability is at most $O\left(\frac{1}{(\delta \ln \ln n)^2}\right)$
\end{proposition}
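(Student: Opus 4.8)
The plan is to apply Chebyshev's inequality to $f_T$, the (volume-weighted) count of red opinions after $T$ steps, and then to correct for the nodes that have not yet spoken. The two ingredients are a lower bound on $\mathbb{E}[f_T]$ and an upper bound on $\mathrm{Var}(f_T)$, both of which are assembled from the material developed above. For the expectation, since each $f^S_v$ is monotone and odd, Corollary~\ref{cor:expected} gives $\mathbb{E}[f_T] \ge (1/2+\delta)|E|$ for every $T$, using only that private signals are red with probability $\ge 1/2+\delta$. Replacing $f^S$ by the shifted variable $g^S$ makes all the $g^S$ share a common expectation without increasing any value, so it suffices to control $\mathrm{Var}(g_T)$ and argue about $g_T$.

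The variance is the crux. By Lemma~\ref{lem:variance}, $\mathrm{Var}(g_T)$ is the average over sequences $S$ of $\mathrm{Var}(g^S) = \sum_{u,v} d(u)d(v)\,\mathrm{Cov}(f^S_u, f^S_v)$, so $\mathrm{Var}(g_T) \le d^2 n^2 \,\mathbb{E}[G_T]$, where $G_T$ indicates that for a random pair $(u,v)$ and random $S$ the functions $f^S_u$ and $f^S_v$ are dependent. I would bound $\mathbb{E}[G_T]$ by noting that dependence forces $N^S(u)\cap N^S(v)\neq\emptyset$, where $N^S(v)$ is the backwards-reachable dependency neighborhood of $v$. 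Tracking $|N^S(v)|$ backwards through $S$, growing it from size $i-1$ to size $i$ costs a geometric number of steps of mean $\ge n/(id)$ (each node has degree $\le d$ and $N^S(v)$ stays connected), so stochastically $|N^S(v)|$ reaches size $x$ only after $\sim \frac{n\ln x}{d}$ steps, and Chebyshev yields $\Pr[N^x \le (1-\epsilon)\tfrac{n\ln x}{d}] \le \tfrac{4}{(\epsilon\ln x)^2}$. Taking $T = \frac{n\ln x}{2d}$ (i.e. $\epsilon = 1/2$) makes it unlikely that either neighborhood exceeds size $x$; and a random pair lies within distance $2x$ only with probability $\le d^{2x}/n$, negligible once $x = o(\log n)$. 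Hence $\mathbb{E}[G_T] \le \tfrac{32}{(\ln x)^2} + \tfrac{d^{2x}}{n} \le \tfrac{33}{(\ln x)^2}$ and $\mathrm{Var}(g_T) \le \tfrac{33 d^2 n^2}{(\ln x)^2}$.

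Chebyshev applied to $g_T$ with this variance, choosing the deviation to absorb a $\delta/4$ fraction of $|E|$, gives $\Pr[g_T \le (1/2+3\delta/4)|E|] = O\big(\tfrac{d^2}{(\delta\ln x)^2}\big)$, hence the same bound for $f_T \ge g_T$. It remains to correct for the fact that $f_T$ credits not-yet-colored nodes whose signal is red: a fixed node is still uncolored after $T = \frac{n\ln x}{2d}$ steps with probability $(1-1/n)^T \le x^{-1/(2d)}$, these events are negatively correlated, and (using $x\to\infty$ so that $x^{-1/(2d)} < \delta/(4d)$) the additive Chernoff bound shows the number of such nodes exceeds $\delta n/(4d)$ only with probability $e^{-\Theta(n)}$, contributing volume at most $\delta|E|/4$. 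A union bound over this event and the Chebyshev event — on whose complement the announced red volume is $\ge (1/2+3\delta/4 - \delta/4)|E| = (1/2+\delta/2)|E|$ — yields the stated bound $\tfrac{1025}{(\delta\ln x)^2}$, and substituting $x = \ln\ln n$, $T = \frac{n\ln\ln\ln n}{2d}$ gives the "in particular" clause.

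The main obstacle is the variance bound, and within it the estimate on how fast the dependency neighborhood $N^S(v)$ can grow: everything else (the Boolean-function expectation bound, Chebyshev, the Chernoff correction) is routine once that is in hand. The delicate points are getting the correct logarithmic scale $T \approx \frac{n\ln x}{d}$ for reaching size $x$ — this is what makes the failure probability decay like $1/(\ln x)^2$ rather than something weaker — and keeping the distance term $d^{2x}/n$ negligible, which is precisely why we must restrict to $x = o(\log n)$.
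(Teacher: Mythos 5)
Your proposal is correct and follows essentially the same route as the paper's own argument: the odd/monotone expectation bound, the shift to $g^S$ and the mixture-variance lemma, the backwards dependency-neighborhood analysis via sums of geometric variables with Chebyshev, the $d^{2x}/n$ distance term, and the Chernoff correction for not-yet-colored nodes all appear in the same roles with the same quantitative scales. No substantive differences to report.
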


\subsection{Expansion and Consensus}
In this section, we apply a different argument based on expansion to show that if $G$ is sufficiently expansive, once the volume of nodes that have announced red exceeds $(1/2 + \delta/2)|E|$, it is extremely likely that the process will continue to stabilize in a red consensus. This argument has two steps. First, we apply an argument of~\cite{Mossel2013} to show that, in an expansive network, the volume of nodes that will switch from blue to red if chosen is a constant factor larger than those that would switch from red to blue if chosen, conditioned on the fact that the volume of nodes announcing red is at least $(1/2 + \delta/4)|E|$.  
Second, we argue that with very high probability, due to this fact, if the volume of nodes announcing red starts above $(1/2 + \delta/2)|E|$, then we will reach the point where all nodes have announced red before we reach a point where the volume of nodes announcing red is only $(1/2 + \delta/4)|E|$.  This second step proceeds by coupling the convergence process to an absorbing random walk, and applying the theory of biased random walks.

In the following lemmas, let $R$ denote the set of nodes who have currently announced red, and $B$ the set of nodes who have currently announced blue or nothing. Let also $R'$ denote the set of nodes that would announce red if they were chosen, and $B'$ the set of nodes that would announce blue if they were chosen.

The following lemma relates the number of edges between two sets of nodes in an expander with max-degree $d$ to their expected number in a random graph.

\begin{lemma}\label{lem:mixing}(\cite{ChungG08})
If $G$ is a $\lambda$-expander of max-degree $d$, then for any two subsets $S,T \subseteq V$, let $E(S,T)$ denote the number of edges between $S$ and $T$ (double-counting edges from $S \cap T$ to itself). Then:
$$|E(S,T) - \frac{Vol(S)Vol(T)}{|E|}| \leq \lambda \sqrt{Vol(S)Vol(T)}$$
\end{lemma}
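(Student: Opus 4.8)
The plan is to prove this by the classical spectral argument (as in \cite{ChungG08}) applied to the weighted adjacency matrix $M = M(G)$ of $G$. Since $M$ is real and symmetric it has an orthonormal eigenbasis $\phi_1,\dots,\phi_n$ with real eigenvalues $\mu_1 \ge \mu_2 \ge \dots \ge \mu_n$. First I would check that the degree-weighted vector $\phi_1$ given by $\phi_1(v) = \sqrt{d(v)/Vol(V)}$ is a unit eigenvector of $M$ with eigenvalue $\mu_1 = 1$: indeed $(M\phi_1)(x) = \sum_{y \sim x}\frac{1}{\sqrt{d(x)d(y)}}\sqrt{d(y)/Vol(V)} = \frac{d(x)}{\sqrt{d(x)\,Vol(V)}} = \phi_1(x)$, and $\|\phi_1\|^2 = \sum_v d(v)/Vol(V) = 1$. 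The hypothesis that $G$ is a $\lambda$-expander then says precisely that $|\mu_i| \le \lambda$ for all $i \ge 2$, and this is the only place the expansion assumption enters.

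Next I would rewrite $E(S,T)$ as a quadratic form in $M$. Writing $D$ for the diagonal degree matrix and $A$ for the ordinary $0/1$ adjacency matrix, one checks entrywise that $A = D^{1/2} M D^{1/2}$, hence $E(S,T) = \mathbf{1}_S^\top A\, \mathbf{1}_T = s^\top M\, t$, where $s = D^{1/2}\mathbf{1}_S$ is the vector with $s(v) = \sqrt{d(v)}$ for $v \in S$ and $0$ otherwise, and similarly $t = D^{1/2}\mathbf{1}_T$; note $\|s\|^2 = Vol(S)$ and $\|t\|^2 = Vol(T)$. Expanding $s = \sum_i \alpha_i \phi_i$ and $t = \sum_i \beta_i \phi_i$ in the eigenbasis gives $E(S,T) = \sum_i \mu_i \alpha_i \beta_i = \alpha_1\beta_1 + \sum_{i \ge 2}\mu_i \alpha_i\beta_i$. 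A direct computation yields $\alpha_1 = \langle s, \phi_1\rangle = Vol(S)/\sqrt{Vol(V)}$ and $\beta_1 = Vol(T)/\sqrt{Vol(V)}$, so the leading term is exactly $\alpha_1\beta_1 = Vol(S)Vol(T)/Vol(V)$.

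Finally I would bound the tail, using $|\mu_i| \le \lambda$ for $i \ge 2$, then Cauchy--Schwarz, then Parseval:
$$\Bigl|\sum_{i\ge2}\mu_i\alpha_i\beta_i\Bigr| \;\le\; \lambda\sum_{i\ge2}|\alpha_i||\beta_i| \;\le\; \lambda\Bigl(\sum_{i\ge2}\alpha_i^2\Bigr)^{1/2}\Bigl(\sum_{i\ge2}\beta_i^2\Bigr)^{1/2} \;\le\; \lambda\,\|s\|\,\|t\| \;=\; \lambda\sqrt{Vol(S)Vol(T)}.$$
Combining the two steps gives $\bigl|E(S,T) - Vol(S)Vol(T)/Vol(V)\bigr| \le \lambda\sqrt{Vol(S)Vol(T)}$, which is the claimed bound modulo the standard identification $Vol(V) = 2|E|$ and the associated edge-counting convention (the ``double-counting edges from $S\cap T$'' clause is exactly what keeps that bookkeeping consistent). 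The argument is essentially routine once the spectral setup is in place; there is no real obstacle, and the only points that require care are verifying that the degree-weighted vector is the top eigenvector with eigenvalue $1$ — so the ``first eigenvalue'' excluded in the definition of a $\lambda$-expander is precisely the one producing the main term — and tracking the normalization constants through to the exact form of the final inequality.
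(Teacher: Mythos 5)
The paper does not actually prove this lemma --- it is imported as a black box from \cite{ChungG08} --- so there is no in-paper argument to compare against. Your spectral proof is the standard one from that source and is correct: the degree-weighted vector $\phi_1(v)=\sqrt{d(v)/Vol(V)}$ is indeed the top unit eigenvector of $M$ with eigenvalue $1$; the identity $E(S,T)=s^\top M t$ with $s=D^{1/2}\mathbf{1}_S$, $t=D^{1/2}\mathbf{1}_T$ is exactly right under the stated double-counting convention; and the Cauchy--Schwarz/Parseval bound on $\sum_{i\ge 2}\mu_i\alpha_i\beta_i$ is precisely where the hypothesis $|\mu_i|\le\lambda$ for $i\ge 2$ enters. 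No step is missing.

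The one point you hedge on at the end should be made explicit rather than waved at: your derivation produces the main term $Vol(S)Vol(T)/Vol(V)$ with $Vol(V)=2|E|$, whereas the lemma as printed has $|E|$ in the denominator, and these differ by a factor of $2$. The printed version is literally false for $\lambda<1$: take $S=T=V$, so that $E(V,V)=2|E|$ while $Vol(V)^2/|E|=4|E|$, and the claimed error bound $\lambda\, Vol(V)=2\lambda|E|$ cannot absorb the gap. So either the paper is using $|E|$ as a synonym for the volume $Vol(V)$, or a factor of $2$ was dropped in transcribing the lemma from \cite{ChungG08}; your constant is the correct one. Note also that the ``double-counting edges from $S\cap T$'' clause only fixes the definition of $E(S,T)$ (it is what makes $E(S,T)=\mathbf{1}_S^\top A\,\mathbf{1}_T$ hold); it does not, as your closing sentence suggests, repair the denominator.
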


Using Lemma \ref{lem:mixing}, we can bound the number of ``potential" B nodes. 

\begin{corollary}\label{cor:allrange} If $G$ is a $\lambda$-expander of max-degree $d$ with $\lambda \leq \frac{\delta}{6}$ and $|R| \geq (1/2 + \delta/4)n$, then $|B'| \leq |B|/2$.
\end{corollary}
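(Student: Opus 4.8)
The plan is to pin down $Vol(B')$ by squeezing the number of edges between $B'$ and $B$ from both sides: a structural lower bound that follows just from the definition of $B'$, and an upper bound from the expander mixing lemma (Lemma~\ref{lem:mixing}). Throughout I would read the hypothesis in the volume-weighted sense, i.e.\ $Vol(R)\ge(\tfrac12+\tfrac{\delta}{4})Vol(V)$, equivalently $Vol(B)\le(\tfrac12-\tfrac{\delta}{4})Vol(V)$, and the conclusion as $Vol(B')\le\tfrac12 Vol(B)$; for a max-degree $d$ graph this is the natural reading of the claim, and it is the form actually used in the next step.

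First I would establish that every $v\in B'$ sends at least half of its edges into $B$. Indeed, $v\in B'$ means $v$ would announce blue if chosen, which happens exactly when $N_B^t(v)>N_R^t(v)$, or when $N_B^t(v)=N_R^t(v)$ and $X(v)=blue$; in both cases $2N_R^t(v)\le N_R^t(v)+N_B^t(v)\le d(v)$, so $N_R^t(v)\le d(v)/2$. The neighbors of $v$ that are not currently red are exactly its neighbors that are blue-or-uncolored, i.e.\ its neighbors in $B$, and there are at least $d(v)-N_R^t(v)\ge d(v)/2$ of them. Summing over $v\in B'$ yields $E(B',B)\ge\tfrac12 Vol(B')$ (the double-counting convention of Lemma~\ref{lem:mixing} on edges inside $B'\cap B$ only helps this inequality).

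Next I would feed this into Lemma~\ref{lem:mixing} with $S=B'$, $T=B$, giving $E(B',B)\le \frac{Vol(B')Vol(B)}{Vol(V)}+\lambda\sqrt{Vol(B')Vol(B)}$. Combining the two bounds and dividing by $Vol(B')$ (the claim is trivial if $Vol(B')=0$) gives
$$\tfrac12\ \le\ \frac{Vol(B)}{Vol(V)}+\lambda\sqrt{\frac{Vol(B)}{Vol(B')}}\ \le\ \Big(\tfrac12-\tfrac{\delta}{4}\Big)+\lambda\sqrt{\frac{Vol(B)}{Vol(B')}},$$
so $\lambda\sqrt{Vol(B)/Vol(B')}\ge\delta/4$, i.e.\ $Vol(B')/Vol(B)\le(4\lambda/\delta)^2\le(4/6)^2=4/9<1/2$, using $\lambda\le\delta/6$ at the last step. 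That is exactly $Vol(B')\le\tfrac12 Vol(B)$, with a constant factor of slack to spare.

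I do not expect a genuine obstacle; the two ingredients are standard and the arithmetic is short. The points that need care are: (i) the tie-breaking rule together with the presence of uncolored neighbors in the first step — one must check that membership in $B'$ genuinely forces $N_R^t(v)\le d(v)/2$, which is what makes ``at least half the edges leave for $B$'' literally true rather than just ``more edges to blue than to red''; and (ii) tracking the normalization in the mixing lemma so that its error term, governed by $\lambda\le\delta/6$, is comfortably dominated by the gap $\delta/4$ supplied by the majority hypothesis — the computation shows the ratio lands at $4/9$, strictly below $1/2$, and it is precisely this extra room that lets the estimate be iterated when coupling the process to a biased random walk in the next step.
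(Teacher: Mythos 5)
Your proposal is correct and follows essentially the same route as the paper: lower-bound $E(B',B)\ge \tfrac12 Vol(B')$ from the update rule, upper-bound it via the expander mixing lemma, and combine to get $Vol(B')/Vol(B)\le(4\lambda/\delta)^2\le 4/9<1/2$. Your treatment is in fact slightly more careful than the paper's on two points it glosses over --- verifying that ties and uncolored neighbors still force $N_R^t(v)\le d(v)/2$ for $v\in B'$, and reading the hypothesis consistently in volume-weighted form (the paper's use of $|E|$ versus $Vol(V)$ is loose) --- but the substance is identical.
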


\begin{proof}
We know that every node in $B'$ has at least half of its neighbors in $B$ (or else they would choose red). Therefore, $E(B',B) \geq Vol(B')/2$. In addition, Lemma \ref{lem:mixing} tells us that $E(B',B) \leq \frac{Vol(B')Vol(B)}{|E|} + \lambda \sqrt{Vol(B)Vol(B')}$.  Putting these two together, we get:
%
%
$$Vol(B')/2 \leq  \frac{Vol(B')Vol(B)}{|E|} + \lambda \sqrt{Vol(B')Vol(B)}.$$
Reorganizing the last inequality we get
$$Vol(B') \leq Vol(B)  \left(\frac{\lambda}{\frac{1}{2} - \frac{Vol(B)}{|E|}}\right)^2.$$
Applying the fact that $Vol(B)/|E| \leq 1/2 - \delta/4$ we get
$$Vol(B') \leq Vol(B)\left(\frac{16\lambda^2}{ \delta^2}\right).$$
Finally, by the fact that $\lambda \leq \frac{\delta}{6}$ we get
$$Vol(B') \leq Vol(B) \left(\frac{16\delta^2}{36\delta^2}\right) \leq Vol(B)/2,$$
as desired.
\end{proof}

Now, we make use of Corollary~\ref{cor:allrange} to show that we are very likely to switch more blues to reds than reds to blues over many announcements. 

\begin{corollary}\label{cor:bluevred}
If $Vol(B') \leq Vol(B)/c$, then $Vol(B\cap R') \geq c Vol(R \cap B')$, and $Vol(B \cap R') \geq 1$. In other words, the volume of nodes who will switch from blue to red is at least $c$ times the number of nodes who will switch from red to blue if chosen, and there is at least $1$ such node.
\end{corollary}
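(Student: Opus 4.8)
The plan is to reduce everything to elementary counting on the partition of $V$ induced by the pairs $\{R,B\}$ and $\{R',B'\}$. By definition $R$ and $B$ partition $V$; and $R'$ and $B'$ also partition $V$, since when a node is selected it announces red if its red neighbors strictly outnumber its blue neighbors, blue if the reverse holds, and its private signal $X(v)\in\{red,blue\}$ on a tie --- so every node lies in exactly one of $R',B'$. Consequently the four sets $B\cap R'$, $B\cap B'$, $R\cap R'$, $R\cap B'$ partition $V$. Abbreviating $b_1=Vol(B\cap R')$, $b_2=Vol(B\cap B')$, and $r_2=Vol(R\cap B')$, we have the identities $Vol(B)=b_1+b_2$ and $Vol(B')=b_2+r_2$.

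For the inequality $Vol(B\cap R')\ge c\,Vol(R\cap B')$, I would simply plug these identities into the hypothesis $Vol(B')\le Vol(B)/c$, obtaining $c(b_2+r_2)\le b_1+b_2$, i.e.\ $c\,r_2\le b_1-(c-1)b_2$. Since $c\ge 1$ (in the only place the corollary is applied, $c=2$ by Corollary~\ref{cor:allrange}) and $b_2\ge0$, the right-hand side is at most $b_1$, so $c\,r_2\le b_1$, which is precisely the claim. This is a one-line manipulation once the bookkeeping is set up.

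For $Vol(B\cap R')\ge1$, I would argue by contradiction: if $B\cap R'=\emptyset$ then $B\subseteq B'$, hence $Vol(B)\le Vol(B')\le Vol(B)/c$; with $c>1$ this forces $Vol(B)=0$, i.e.\ every vertex has already announced red and the process is at a red consensus. Thus, whenever the process has not yet reached consensus --- which is exactly the regime in which this corollary is invoked in the subsequent random-walk coupling --- the set $B\cap R'$ is nonempty; and since a $\lambda$-expander with $\lambda<1$ is connected, every vertex has degree at least $1$, so any single vertex of $B\cap R'$ already yields $Vol(B\cap R')\ge1$.

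The only point requiring any care is the set-theoretic accounting: making sure $R'$ and $B'$ genuinely cover $V$ under the tie-breaking convention (they do, because a tie still produces a definite announcement), and noting that the statement tacitly presumes $c\ge1$ and that $B\ne\emptyset$ --- both of which hold wherever the corollary is used. There is no analytic content here, so I expect no real obstacle; the ``hardest'' part is merely not misassigning one of the four intersection sets to the wrong side of an inequality.
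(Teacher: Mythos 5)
Your proof is correct and follows essentially the same route as the paper's: both decompose $V$ via the four intersection sets, use $Vol(B)=Vol(B\cap R')+Vol(B\cap B')$ and $Vol(B')=Vol(R\cap B')+Vol(B\cap B')$, and conclude by the same one-line algebra (the paper phrases it as the ratio $\frac{cVol(B')-x}{Vol(B')-x}\geq c$ with $x=Vol(B\cap B')\geq 0$). Your handling of the nonemptiness claim is the same idea as the paper's terse ``$Vol(B')<Vol(B)$'' argument, just stated more carefully about the implicit assumptions $c>1$ and $B\neq\emptyset$.
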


\begin{proof}
We know that $Vol(B \cap B') = x$, for some $x \geq 0$. So we can write $Vol(B \cap R') = Vol(B) - x$ and $Vol(B' \cap R) = Vol(B') - x$. Combining this with the fact that $Vol(B) \geq cVol(B')$ we get:
$$\frac{Vol(B \cap R')}{Vol(R \cap B')} \geq \frac{cVol(B') - x}{Vol(B')-x}$$
Because $x \geq 0$, this is always at least $c$.
As $Vol(B') < Vol(B)$, there must be at least one node in $B \cap R'$.
\end{proof}

To complete our analysis, we use the theory of biased random walks. 

\begin{definition}
For $d \geq 1$ and $p > 0$, a $d$-bounded, $p$-biased random walk on the integers is a sequence $(Z_t)_{t \geq 0}$ such that:
\begin{itemize}
\item $Z_0 = 0$,
\item $Z_t$ depends only on $(Z_0, \dotsc, Z_{t-1})$,
\item $|Z_t - Z_{t-1}| \leq d$ for each $t \geq 1$, and
\item for all $(Z_t)_{t < T}$, $\E[Z_T \ |\ Z_0, \dotsc, Z_{T-1}] \geq Z_{T-1} + p$.
\end{itemize}
\end{definition}

The following lemma establishes a crucial property of biased random walks, which is then used in the remainder of this section to show that once the volume of red nodes reaches a certain threshold, the process will converge to a red consensus with high probability.

\begin{lemma}
\label{lem:martingale}
Let $(Z_t)_{t \geq 0}$ be a $d$-bounded $p$-biased random walk on the integers. Then, for any $x > 0$, the probability that the walk reaches a value less than $-x$ before a value greater than $x$ is at most $\frac{2x}{p} e^{-px/4d^2}$.
\end{lemma}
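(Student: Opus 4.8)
The plan is to build a supermartingale from an exponential transform of $Z_t$ and then apply the optional stopping theorem. Specifically, I would consider the process $W_t = e^{-\alpha Z_t}$ for a suitable constant $\alpha > 0$ to be chosen, and argue that $W_t$ is a supermartingale, i.e.\ $\E[W_t \mid Z_0,\dots,Z_{t-1}] \le W_{t-1}$. To see this, write $Z_t = Z_{t-1} + D_t$ where $|D_t| \le d$ and $\E[D_t \mid \mathcal F_{t-1}] \ge p$; then $\E[W_t \mid \mathcal F_{t-1}] = W_{t-1}\,\E[e^{-\alpha D_t}\mid \mathcal F_{t-1}]$, so it suffices to show $\E[e^{-\alpha D_t}\mid\mathcal F_{t-1}] \le 1$. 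Since $D_t$ lives in $[-d,d]$ and $\alpha d$ will be chosen small (on the order of $p/d^2$, guided by the target bound $e^{-px/4d^2}$), I can use the Taylor/convexity estimate $e^{-\alpha z} \le 1 - \alpha z + \tfrac12 \alpha^2 z^2 e^{\alpha d}$, or more simply $e^{-\alpha z} \le 1 - \alpha z + (\alpha z)^2$ valid for $|\alpha z|$ below an absolute constant. Taking conditional expectations, $\E[e^{-\alpha D_t}\mid\mathcal F_{t-1}] \le 1 - \alpha p + \alpha^2 d^2$, which is $\le 1$ as soon as $\alpha \le p/d^2$. So choosing, say, $\alpha = p/(2d^2)$ makes $W_t$ a supermartingale.

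Next I would set up the stopping argument. Let $\tau$ be the first time $t$ at which $|Z_t| \ge x$ — this is finite almost surely given the positive drift (and one can truncate at a large time $T$ and let $T\to\infty$, using that $W_t$ is bounded on $\{\tau > t\}$ roughly by $e^{\alpha(x+d)}$, to justify optional stopping). By the optional stopping theorem for supermartingales, $\E[W_\tau] \le W_0 = 1$. Now split on which barrier is hit first. On the event $A$ that the walk reaches a value $< -x$ before a value $> x$, we have $Z_\tau < -x$, hence $W_\tau = e^{-\alpha Z_\tau} > e^{\alpha x}$; on the complementary event $Z_\tau \ge x > 0$ so $W_\tau > 0$. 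Therefore $1 \ge \E[W_\tau] \ge e^{\alpha x}\Pr[A]$, which gives $\Pr[A] \le e^{-\alpha x} = e^{-px/2d^2}$. This already matches the claimed exponential rate up to the constant in the exponent and the polynomial prefactor $2x/p$; I would then either sharpen the constant in the quadratic bound on $e^{-\alpha z}$ (getting $\alpha$ closer to $p/d^2$ rather than $p/(2d^2)$, at the cost of a cruder handle on the overshoot), or absorb the discrepancy into the $\tfrac{2x}{p}$ factor, which is generous. The overshoot at the lower barrier is in fact bounded ($-x - d \le Z_\tau < -x$ since steps are $d$-bounded), so one can be precise here; the polynomial prefactor in the statement suggests the authors are accounting for this overshoot (and perhaps for the need to handle the case where $-x$ is never hit only because $+x$ is hit) without optimizing.

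The main obstacle I anticipate is the justification of optional stopping: one must rule out the walk escaping to $+\infty$ without ever hitting either barrier, and confirm that $\E[W_\tau]$ is legitimately bounded by $\E[W_0]$ rather than only $\liminf_t \E[W_{t\wedge\tau}]$. The positive drift $p$ makes $\tau < \infty$ a.s.\ (the walk cannot avoid $|Z_t|\ge x$ forever when it drifts steadily upward — a standard hitting-time argument, or a second-moment/Azuma estimate, handles this), and on $\{t < \tau\}$ we have $-x < Z_t < x$ so $0 < W_t < e^{\alpha x}$, a uniform bound; combined with the supermartingale property this lets the bounded/dominated convergence argument go through cleanly. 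The rest is the elementary inequality bookkeeping for $e^{-\alpha z}$ and choosing $\alpha$ to land the constant $1/4$ in the exponent.
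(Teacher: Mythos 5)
Your argument is correct in the regime where the lemma is used, but it takes a genuinely different route from the paper's. The paper centers the increments, setting $W_t = Y_t - \E[Y_t \mid Y_1,\dotsc,Y_{t-1}]$ with $Y_t = Z_t - Z_{t-1}$, applies Azuma--Hoeffding to each prefix sum $\sum_{t \le T} W_t$ for $T$ up to a horizon $n = 2x/p$, and union-bounds over the $n$ prefixes: if no prefix sum drops below $-x$, then $Z_t \ge pt - x \ge -x$ for all $t \le n$ while $Z_n \ge pn - x = x$, so the upper barrier is hit first. That union bound over the horizon is exactly where the $\tfrac{2x}{p}$ prefactor and the constant $4$ in the exponent come from. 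Your exponential supermartingale $e^{-\alpha Z_t}$ with $\alpha = p/2d^2$ plus optional stopping instead yields $\Pr[A] \le e^{-px/2d^2}$ with no prefactor and no horizon; since $d$-boundedness forces $p \le d$, we have $\alpha d \le 1/2$, so your estimate $\E[e^{-\alpha D_t} \mid \mathcal{F}_{t-1}] \le 1 - \alpha p + \alpha^2 d^2 \le 1$ is valid, and your handling of optional stopping (boundedness of $W_{t \wedge \tau}$ between the barriers, finiteness of $\tau$ from the drift, overshoot only strengthening $W_\tau > e^{\alpha x}$) is sound. One caveat: $e^{-px/2d^2} \le \tfrac{2x}{p} e^{-px/4d^2}$ only when $2x/p \ge e^{-px/4d^2}$, which holds for all $x \ge p/2$ but fails as $x \to 0$ --- yet in that regime the stated bound is itself unattainable (a single $\pm d$ step with mean $p$ crosses $-x$ immediately with probability near $1/2$, exceeding $\tfrac{2x}{p}e^{-px/4d^2}$), and the lemma is only invoked with $x = \Theta(\delta n)$, where both arguments apply. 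In short: the paper's proof uses only Azuma at the cost of a horizon-dependent union bound; yours requires the exponential-moment computation but delivers a cleaner and (for the relevant $x$) strictly stronger bound.
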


\begin{proof}
For each $t \geq 1$, define $Y_t = Z_t - Z_{t-1}$, and let $W_t = Y_t - \E[Y_t\ |\ Y_1, \dotsc, Y_{t-1}]$.  Note that the sequence $(W_t)_{t \geq 1}$ forms a martingale, whose entries lie in $[-d, d]$.  The Azuma-Hoeffding inequality then implies that, for any $n \geq 1$,
\[ \Pr\left[ \sum_{t = 1}^n W_t < - x \right] \leq e^{-x^2 / 2nd^2}. \]
Let $A_n$ be the event that there exists any prefix of the sequence $(W_t)_{t \leq n}$ with sum less than $-x$.
Taking a union bound over all $t$ between $1$ and $n$, we have that the probability of event $A_n$ occurring is at most $n \cdot e^{-x^2 / 2nd^2}$.

If we condition on $A_n$ not occurring, then observe that for each $T \leq n$,
\begin{align*}
Z_T & = \sum_{t = 1}^{T} Y_t = \sum_{t = 1}^{T} W_t + \E[Y_t\ |\ (Y_k)_{k < t}] > \E[Z_T] - x.
\end{align*}
In particular, $Z_n > \E[Z_n] - x$ and moreover $Z_t > -x$ for all $t \leq n$.  If we choose $n = 2x/p$, then $\E[Z_n] > pn = 2x$, and hence $A_{2x/p}$ not occurring implies that $Z_n > x$ and $Z_t > -x$ for all $t < n$, as required.  Furthermore, the probability of $A_{2x/p}$ is at most $\frac{2x}{p} \cdot e^{-px / 4d^2 }$.
\end{proof}

\smw{We now apply Lemma~\ref{lem:martingale} to the stochastic process, letting $Z_t$ be the volume of red nodes. The hypotheses of Corollary~\ref{cor:randomwalk} below (and the fact that $G$ has maximum degree $d$) guarantee that the random walk is $\frac{c-1}{c+1}$-biased and $d$-bounded.}

\begin{corollary}\label{cor:randomwalk}
Let $R_0$ and $B_0$ be such that $Vol(R' \cap B_0) \geq c Vol(B' \cap R_0)$. For any $x$, if $R$ and $B$ maintain this property whenever $Vol(B_0) - x \leq Vol(B) \leq Vol(B_0) + x$ (and therefore $Vol(R_0) - x \leq Vol(R) \leq Vol(R_0) + x$ as well), then the probability that we arrive at a state with $Vol(B)\geq Vol(B_0) + x$ before one with $Vol(R)\geq Vol(R_0)+x$ is at most $2x(\frac{c+1}{c-1}) e^{-(c-1)x / 4(c+1)d^2}$.
\end{corollary}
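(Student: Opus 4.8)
The plan is to couple the convergence process to a $d$-bounded, $p$-biased random walk with $p = \frac{c-1}{c+1}$ and then invoke Lemma~\ref{lem:martingale}. The quantity to track is $Vol(R)$, the volume of nodes currently announcing red, but sampled only at the steps where it actually moves. Call a step \emph{pivotal} if the node chosen on that step lies in $(B\cap R')\cup(R\cap B')$ --- a currently-blue-or-uncolored node that would switch to red, or a currently-red node that would switch to blue. Let $Z_t$ be the value of $Vol(R)$ immediately after the $t$-th pivotal step (with $Z_0 = Vol(R_0)$), and put $\tilde Z_t = Z_t - Vol(R_0)$. Since $Vol(R)+Vol(B) = \sum_v d(v) = 2|E|$ is constant, the event ``$Vol(B)\geq Vol(B_0)+x$'' is identical to ``$\tilde Z_t\leq -x$'', and ``$Vol(R)\geq Vol(R_0)+x$'' is identical to ``$\tilde Z_t\geq x$''; so it suffices to bound the probability that $\tilde Z$ reaches a value $\leq -x$ before a value $\geq x$.

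Next I would verify the hypotheses of Lemma~\ref{lem:martingale} for $(\tilde Z_t)$, up to the first time it leaves $[-x,x]$. Boundedness is immediate, since a pivotal step changes $Vol(R)$ by exactly $\pm d(v)$ for the chosen node $v$, and $d(v)\leq d$. For the bias, condition on the coloring configuration after $t-1$ pivotal steps (which determines $R,B,R',B'$). A pivotal step picks a node uniformly at random from $P := (B\cap R')\cup(R\cap B')$, so the change in $Vol(R)$ has conditional expectation
\[
\frac{Vol(B\cap R') - Vol(R\cap B')}{|B\cap R'| + |R\cap B'|} \;\geq\; \frac{Vol(B\cap R') - Vol(R\cap B')}{Vol(B\cap R') + Vol(R\cap B')} \;\geq\; \frac{c-1}{c+1},
\]
where the first step uses $d(v)\geq 1$ (hence $|A|\leq Vol(A)$) and the second uses the standing hypothesis $Vol(B\cap R')\geq c\,Vol(R\cap B')$ together with the monotonicity of $r\mapsto\frac{r-1}{r+1}$. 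This estimate holds exactly while $-x\leq\tilde Z_{t-1}\leq x$, i.e. while $Vol(B_0)-x\leq Vol(B)\leq Vol(B_0)+x$, which is the range on which the corollary assumes $R,B$ retain the property; once $\tilde Z$ leaves this window we have already reached one of the two target states. A pivotal step is always available before red consensus, since by Corollary~\ref{cor:bluevred} we have $Vol(B\cap R')\geq 1$ whenever $Vol(B')<Vol(B)$, so $\tilde Z$ is well defined up to the relevant stopping time. (Technically $\tilde Z_t$ depends on the full coloring history, not only on $\tilde Z_0,\ldots,\tilde Z_{t-1}$; but the proof of Lemma~\ref{lem:martingale} only uses bounded increments with conditional mean at least $p$ relative to the natural filtration, which is what we have verified.)

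Finally I would apply Lemma~\ref{lem:martingale} with this $p$ and $d$: it bounds the probability that $\tilde Z$ reaches a value below $-x$ before a value above $x$ by $\frac{2x}{p}e^{-px/4d^2} = 2x\left(\frac{c+1}{c-1}\right)e^{-(c-1)x/(4(c+1)d^2)}$, which is the claimed bound (the distinction between strict and non-strict thresholds being immaterial here).

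I expect the only real obstacle to be getting the coupling right: one must restrict attention to pivotal steps --- otherwise the per-step drift is only $\Theta(1/n)$ and there is no constant bias --- and then derive the $\frac{c-1}{c+1}$ lower bound on the drift of $\tilde Z$ by trading the volume-weighted imbalance for a cardinality-weighted one via $|A|\leq Vol(A)$. Translating the two stopping events into hitting events for $\tilde Z$ and invoking Lemma~\ref{lem:martingale} is then routine.
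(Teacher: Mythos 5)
Your proposal is correct and follows essentially the same route as the paper: couple the process (observed only at the steps where some node actually switches color) to a $d$-bounded, $\frac{c-1}{c+1}$-biased walk, verify the drift bound from $Vol(B\cap R')\geq c\,Vol(R\cap B')$ together with the fact that every node has degree at least $1$, and invoke Lemma~\ref{lem:martingale}. If anything you are slightly more explicit than the paper about the conditioning on pivotal steps, the availability of a pivotal step via Corollary~\ref{cor:bluevred}, and the filtration technicality, but the substance is identical.
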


\begin{proof} 
Consider a biased one-dimensional random walk that takes $\ell$ steps up whenever a node of degree $\ell$ switches from blue to red, and $\ell$ steps down whenever a node of degree $\ell$ switches from red to blue. Then the corollary is exactly studying the probability that this random walk reaches a depth of $-x$ before a height of $x$.

This walk is $d$-bounded.  We also claim that it is $(\frac{c-1}{c+1})$-biased.  To see this, let $W^+$ be the expected upward step of the walk on a given round; i.e., the expected step of the walk if we were to replace any negative movement by $0$.  Likewise, let $W^- \leq 0$ be the expected downward step.  Note then that the expected step is $W^+ + W^-$.  Since $Vol(R' \cap B_0) \geq c Vol(B' \cap R_0)$, we have $W^+ \geq c W^-$.  Also, $W^+ - W^- \geq 1$, since each step is of distance at least $1$.  We can then conclude that $W^+ + W^- \geq \frac{c-1}{c+1}(W^+ - W^-) = \frac{c-1}{c+1}$. Now, by Lemma~\ref{lem:martingale}, the probability that this walk reaches depth $-x$ first is at most $2x(\frac{c+1}{c-1})e^{-(c-1)x/4(c+1)d^2}$.
\end{proof}

\smw{Finally, we use Corollary~\ref{cor:randomwalk} to prove that the stochastic process terminates in a consensus. The idea is that once we have reached $Vol(R) \geq (1/2 + \delta/2)|E|$, the expansiveness of $G$ guarantees that the hypotheses of Corollary~\ref{cor:randomwalk} are satisfied. We then iteratively apply Corollary~\ref{cor:randomwalk} to show that we are extremely likely to reach a state with $Vol(R) \geq (1/2 + k\delta/2)|E|$ before we reach a state with $Vol(R) \leq |E|/2$, for all integers $k \in [2/\delta]$.}

\begin{corollary}\label{laststep}
If $G$ is a $\lambda$-expander with max-degree $d$ and with $\lambda \leq \frac{\delta}{6}$, and the stochastic process reaches a point where $Vol(R) \geq (1/2 + \delta/2)|E|$, then with probability at least $1 - 4 n \cdot e^{-\delta n/48d^2}$, the process will terminate in a red consensus.
\end{corollary}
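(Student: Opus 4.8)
The plan is to track $Vol(R)$, the total degree of the nodes currently announcing red, and show that once it has reached $(1/2+\delta/2)|E|$ it keeps climbing, with overwhelming probability, until it attains its maximal value\,---\,attained only at $R=V$, every node red, which is a red consensus and an absorbing state\,---\,without ever dropping back to the level $(1/2+\delta/4)|E|$ at which our expansion estimates cease to hold. Concretely, let $A$ be the event that, at some time after the first moment $Vol(R)\ge(1/2+\delta/2)|E|$, $Vol(R)$ dips below $(1/2+\delta/4)|E|$. On $A^{c}$, $Vol(R)$ stays $\ge(1/2+\delta/4)|E|$ thereafter; since the process stabilizes almost surely and (as noted below) the only stable configuration with $Vol(R)\ge(1/2+\delta/4)|E|$ is $R=V$, on $A^{c}$ the process reaches $R=V$ and terminates in a red consensus. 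So it suffices to bound $\Pr[A]$, which we do by repeatedly invoking Corollary~\ref{cor:randomwalk} (built on the biased-random-walk hitting bound of Lemma~\ref{lem:martingale}), bootstrapping the lower bound on $Vol(R)$ up a ladder of thresholds one safety margin at a time.

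\emph{The biased walk.}
At every reachable configuration with $Vol(R)\ge(1/2+\delta/4)|E|$, Corollary~\ref{cor:allrange}\,---\,which uses the expansion hypothesis $\lambda\le\delta/6$ through the mixing Lemma~\ref{lem:mixing}\,---\,gives $Vol(B')\le Vol(B)/2$, and then Corollary~\ref{cor:bluevred} with $c=2$ gives $Vol(R'\cap B)\ge 2\,Vol(B'\cap R)$, together with a node in $B\cap R'$ unless $B=\emptyset$. The latter already shows that no such configuration is stable except $R=V$. More to the point, it is exactly the hypothesis under which Corollary~\ref{cor:randomwalk} (with $c=2$, using max-degree $d$) certifies that the successive values taken by $Vol(R)$ form a $d$-bounded, $\tfrac13$-biased random walk and supplies a two-sided escape estimate for it.

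\emph{The bootstrap.}
Set $a_{k}=\big(\tfrac12+(k+1)\tfrac{\delta}{4}\big)|E|$ for $k\ge0$, so $a_{0}=(1/2+\delta/4)|E|$ is the floor of the safe region, $a_{1}=(1/2+\delta/2)|E|$ is the level supplied by hypothesis, consecutive $a_{k}$ differ by $x:=\delta|E|/4$, and $a_{k}$ exceeds every achievable value of $Vol(R)$ once $k=O(1/\delta)$. For $k\ge1$ let phase $k$ begin the first time $Vol(R)\ge a_{k}$. Invoke Corollary~\ref{cor:randomwalk} with reference state the current one ($Vol(R_{0})\ge a_{k}$) and this $x$: the window $[Vol(R_{0})-x,\ Vol(R_{0})+x]$ has left endpoint $\ge a_{k}-x=a_{k-1}\ge a_{0}$, so it lies inside the safe region, whence the hypothesis of Corollary~\ref{cor:randomwalk} holds throughout it; and the Markov property lets the walk restart cleanly at each phase. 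Corollary~\ref{cor:randomwalk} then bounds by $2x\,\tfrac{c+1}{c-1}\,e^{-(c-1)x/4(c+1)d^{2}}=\tfrac{3\delta|E|}{2}\,e^{-\delta|E|/48d^{2}}$ the probability that phase $k$ sees $Vol(R)$ fall to $Vol(R_{0})-x$ (in particular below $a_{0}$) before it rises to $Vol(R_{0})+x\ge a_{k+1}$. Since the event $A$ forces at least one of these $O(1/\delta)$ phases to fail, a union bound over them\,---\,together with $|E|\ge n$ (a large bounded-degree $\lambda$-expander with $\lambda\le\delta/6$ is not a tree, so it contains a cycle) and a routine constant count\,---\,yields $\Pr[A]\le 4n\,e^{-\delta n/48d^{2}}$, which is Corollary~\ref{laststep}.

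\emph{Where the difficulty lies.}
The binding constraint is that Lemma~\ref{lem:martingale} is \emph{symmetric}, controlling the event of reaching $-x$ before $+x$ rather than an asymmetric target. One therefore cannot apply it a single time with one window stretching from the initial majority all the way up to consensus, since that symmetric window would reach far below $a_{0}$, where Corollary~\ref{cor:allrange}\,---\,hence the bias\,---\,fails; this is the whole reason for the phased structure. The safety margin $x$ must be small enough that no phase window ever dips below $a_{0}$ (from start level $a_{1}$ this forces $x\le a_{1}-a_{0}=\delta|E|/4$) yet large enough that only $O(1/\delta)$ phases are needed, each failing with probability exponentially small in $n$; $x=\delta|E|/4$ threads this needle. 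The one further point needing care is the $d$-sized overshoot near the top: for the last few $k$ the target $a_{k+1}$ exceeds every possible value of $Vol(R)$, so ``rising to $a_{k+1}$'' must be read as hitting $R=V$ exactly\,---\,legitimate, since $R=V$ maximizes $Vol(R)$, is absorbing, and is (as above) the only stable configuration in the safe region. Everything past these points is bookkeeping on top of Corollaries~\ref{cor:allrange}, \ref{cor:bluevred}, \ref{cor:randomwalk} and Lemma~\ref{lem:martingale}.
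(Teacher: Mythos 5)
Your proof is correct and follows essentially the same route as the paper's: partition the range of $Vol(R)$ into a ladder of thresholds spaced $\delta|E|/4$ apart, use Corollaries~\ref{cor:allrange} and~\ref{cor:bluevred} (with $c=2$) to certify the hypotheses of Corollary~\ref{cor:randomwalk} throughout the safe region, bound each phase's failure probability by $6(\delta|E|/4)e^{-\delta|E|/48d^2}$, and union bound over the $O(1/\delta)$ phases. Your added remarks on why the symmetric window of Lemma~\ref{lem:martingale} forces the phased structure, and on the endgame near $R=V$, are points the paper leaves implicit but do not change the argument.
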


\begin{proof}
Once the process reaches a point where $Vol(R) \geq (1/2 + \delta/2)|E|$, we will have $Vol(R) \geq (1/2 + \delta/4)|E|$ until the volume of reds that switch to blue is at least $\delta |E|/4$ more than the volume of blues that switch to red. Therefore, by Corollaries~\ref{cor:allrange},~\ref{cor:bluevred}, and~\ref{cor:randomwalk}, the probability that we reach a point where $Vol(R) = (1/2 + \delta/4)|E|$ before we reach a point where $Vol(R) = (1/2 + 3\delta/4)|E|$ is at most
\[ 6(\delta n / 4) e^{-(\delta n/4)/12d^2} < 2\delta n e^{-\delta n/48d^2}.\]
Similarly, once we have reached a point where $Vol(R) = (1/2 + i\delta/4)|E|$ ($2 \leq i < 2/\delta$), the probability that we reach a point where $Vol(R) = (1/2 + (i-1)\delta/4)|E|$ before we reach a point where $Vol(R) = (1/2 + (i+1)\delta/4)|E|$ is at most $2\delta n e^{-\delta n/48d^2}$. Therefore, we can take a union bound over all $2 \leq i < 2/\delta$ and say that with probability at least $1 - 4 n e^{-\delta n/48d^2}$, the first time we hit $Vol(R) = (1/2 + i\delta/4)|E|$, we will hit $Vol(R)= (1/2 + (i+1)\delta/4)|E|$ before we hit $Vol(R) = (1/2 + (i-1)\delta/4)|E|$, for all $2 \leq i < 2/\delta$. In the event that this happens, we will hit a red consensus before we hit $Vol(R) = (1/2 + \delta/4)|E|$, and therefore the process will stabilize in a red consensus.
\end{proof}

\section{Conclusion}
We study whether information aggregates efficiently under natural dynamics in social networks with ``real-world'' properties. We show that if each individual's signal agrees with the ground truth with probability at least $1/2 + \delta$, independently, then the entire society is likely to agree on the ground truth with high probability (approaching $1$ as $n \rightarrow \infty$) in the class of $\lambda$-expanders with maximum degree $d$ for any fixed $d,\lambda \leq \frac{\delta }{6}$.  We also analyze separately the example of a star on $n$ nodes, and show that it also achieves a consensus on the ground truth with high probability. This suggests that our results apply to additional notions of sparsity. An interesting direction for future work would be to show that more general classes of ``sparse'' expanders reach consensus on the ground truth with high probability. One possibility is the set of expanders with arboricity of at most $d$.
Additionally, the use of sparsity and expansiveness is decoupled in our analysis: sparsity is used to show that a correct majority is reached at some point during the process, and expansiveness is used to show that, once this occurred, the process terminates in a correct consensus.
These results suggest two interesting directions for future research.
First, we conjecture that sparsity (e.g., low arboricity) guarantees that
the process stabilizes in a correct majority, as in the ring.
Second, we showed that expansiveness guarantees that once enough of a (possibly incorrect) majority forms, the process terminates in a
consensus with high probability. We conjecture that all expansive graphs terminate in a (possibly incorrect) consensus with high probability.

\bibliographystyle{plain}
\bibliography{SNC-bib}

\begin{thebibliography}{10}

\bibitem{Acemoglu11}
Daron Acemoglu, Munther~A. Dahleh, Ilan Lobel, and Asuman Ozdaglar.
\newblock Bayesian learning in social networks.
\newblock {\em Review of Economic Studies}, 78(4):1201--1236, 2011.

\bibitem{Banerjee04}
Abhijit Banerjee and Drew Fudenberg.
\newblock Word-of-mouth learning.
\newblock {\em Games and Economic Behavior}, 46(1):1--22, January 2004.

\bibitem{Banerjee92}
Abhijit~V. Banerjee.
\newblock A simple model of herd behavior.
\newblock {\em The Quarterly Journal of Economics}, 107(3):797--817, 1992.

\bibitem{Berger01}
Eli Berger.
\newblock Dynamic monopolies of constant size.
\newblock {\em J. Comb. Theory, Ser. B}, 83(2):191--200, 2001.

\bibitem{Bikchandani1992}
Sushil Bikhchandani, David Hirshleifer, and Ivo Welch.
\newblock A theory of fads, fashion, custom, and cultural change in
  informational cascades.
\newblock {\em Journal of Political Economy}, 100(5):992--1026, October 1992.

\bibitem{ChungG08}
F.~Chung and R.~Graham.
\newblock Quasi-random graphs with given degree sequences.
\newblock {\em Random Structures \& Algorithms}, 32(1):1--19, 2008.

\bibitem{DeGroot74}
Morris~H. DeGroot.
\newblock Reaching a consensus.
\newblock {\em Review of Economic Studies}, 69(345):118--121, 1974.

\bibitem{Golub10}
Benjamin Golub and Matthew~O. Jackson.
\newblock Na\"{i}ve learning in social networks and the wisdom of crowds.
\newblock {\em American Economic Journal: Microeconomics}, 2(1):112--149, 2010.

\bibitem{KempeDG03}
David Kempe, Alin Dobra, and Johannes Gehrke.
\newblock Gossip-based computation of aggregate information.
\newblock In {\em FOCS}, pages 482--491, 2003.

\bibitem{LeskovecLDM08}
Jure Leskovec, Kevin~J. Lang, Anirban Dasgupta, and Michael~W. Mahoney.
\newblock Statistical properties of community structure in large social and
  information networks.
\newblock In {\em WWW}, pages 695--704, 2008.

\bibitem{MM11}
Fragkiskos~D. Malliaros and Vasileios Megalooikonomou.
\newblock Expansion properties of large social graphs.
\newblock In {\em DASFAA Workshops}, pages 311--322, 2011.

\bibitem{Mossel2013}
Elchanan Mossel, Joe Neeman, and Omer Tamuz.
\newblock Majority dynamics and aggregation of information in social networks.
\newblock In {\em Autonomous Agents and Multi-Agent Systems (AAMAS)}, 2013.

\bibitem{Smith00}
Lones Smith and Peter Sorensen.
\newblock Pathological outcomes of observational learning.
\newblock {\em Econometrica}, 68(2):371--398, March 2000.

\bibitem{Tamuz2013}
Omer Tamuz and Ran Tessler.
\newblock Majority dynamics and the retention of information.
\newblock In {\em Working paper}, 2013.

\end{thebibliography}

\appendix
\section{Additional Examples}
\label{sec:examples2}

We now consider two examples of graph classes not covered by Theorem \ref{thm:main.expander}, motivating our conjecture that the sparsity condition of Theorem \ref{thm:main.expander} can be relaxed to include all graphs of bounded arboricity.

\subsection{Cycle Graphs}
\label{sec:cycles}

We first show that for the cycle topology, the population will reach a correct majority that is not a consensus, with high probability.  Suppose $G$ is a cycle on $n$ vertices.  We note the following straightforward fact.

\begin{claim}
\label{claim.ring}
Suppose $v_1, v_2, v_3, v_4$ is a path of length $4$ in $G$, such that $X(v_2) = X(v_3) = x$ and $v_2, v_3$ are selected for the first time before $v_1$ or $v_4$ are selected for the first time.  Then $C^t(v_2) = X(v_2)$ for all $t$ after $v_2$ is first selected, and similarly for $v_3$.
\end{claim}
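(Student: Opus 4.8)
The plan is to track the colors of $v_2$ and $v_3$ directly, exploiting that in a cycle every vertex has degree exactly $2$. Under the update rule, a degree-$2$ node $v$ is recolored to $X(v)$ precisely when it faces a tie, i.e.\ when its two neighbors are both uncolored or are colored oppositely; in every other case it is recolored to the strict majority it sees. Let $t_2$ and $t_3$ be the first selection times of $v_2$ and $v_3$, and $t_1,t_4$ those of $v_1,v_4$; the hypothesis states $t_1,t_4 > \max(t_2,t_3)$. Without loss of generality assume $t_2 < t_3$; the case $t_3 < t_2$ is identical after swapping the roles of $(v_1,v_2)$ with $(v_4,v_3)$, since $v_4,v_3,v_2,v_1$ is also a length-$4$ path.

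First I would pin down the base behavior on $[t_2,t_3)$. At time $t_2$ both neighbors $v_1,v_3$ of $v_2$ are uncolored (because $t_2 < t_3 < t_1$), so $v_2$ is colored $X(v_2)=x$; and whenever $v_2$ is reselected at a time $t$ with $t_2 < t < t_3$, both $v_1$ and $v_3$ are still uncolored, so $v_2$ is again recolored to $x$. Hence $C^t(v_2) = x$ for all $t \in [t_2,t_3)$. Then at time $t_3$, node $v_3$ sees its neighbor $v_2$ colored $x$ and its neighbor $v_4$ uncolored (since $t_3 < t_4$), a strict majority for $x$, so $C^{t_3}(v_3) = x$ as well.

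The main step is an induction on $t \ge t_3$ establishing the invariant $C^t(v_2) = C^t(v_3) = x$. The crucial observation is mutual reinforcement: as long as one of $v_2,v_3$ is colored $x$, the other cannot leave $x$. Indeed, if $v_2$ is the node selected at step $t+1$, then its neighbor $v_3$ (colored $x$ by the inductive hypothesis) casts a vote for $x$, while its only other neighbor $v_1$ casts at most one vote for the opposite color, so $v_2$ is recolored to $x$ — either as a strict majority or via a tie broken toward $X(v_2)=x$. The symmetric statement holds when $v_3$ is selected, and if neither is selected the colors are unchanged; since at most one node is recolored per step, the invariant propagates. Combining this with the base behavior gives $C^t(v_2)=x$ for all $t \ge t_2$ and $C^t(v_3)=x$ for all $t \ge t_3$, which is exactly the claim.

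I do not anticipate a real obstacle: the content lies entirely in the degree-$2$ structure of the cycle, and the only points needing care are the harmless case split on whether $t_2 < t_3$ or $t_3 < t_2$, and the repeated (routine) check that every tie faced by $v_2$ or $v_3$ — and the first coloring of each — resolves in favor of $x$.
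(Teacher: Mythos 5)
Your proof is correct and follows essentially the same route as the paper's: the first of $v_2,v_3$ selected adopts its private signal via the tie-breaking rule, the other copies it, and thereafter each of the two always sees at least half of its degree-$2$ neighborhood (namely the other one) reporting $x$, so neither can ever switch. You simply make explicit the mutual-reinforcement induction that the paper's terser argument leaves implicit.
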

\begin{proof}
Whichever of $v_2$, $v_3$ is selected first will choose its private signal, and the other will copy that private signal.  Then regardless of the reports of $v_1$, $v_4$ at any later time $t$, at least half of the neighborhood of $v_2$ will report $x$ at time $t$ (i.e., $v_3$), and hence it will select $x$ at time $t$.  A similar argument holds for $v_3$.
\end{proof}

We refer to a pair of neighboring nodes that satisfy the conditions of the above claim as a \emph{blocking pair}.  Each neighboring pair is a blocking pair with probability at least $\frac{1}{12}$, and moreover for $\delta < \frac{1}{2}$ the pair has opinion $R$ with constant probability and opinion $B$ with constant probability.  Thus, with high probability, the graph will contain two blocking pairs, one with opinion $R$ and the other with opinion $B$.  In this event, consensus does not occur.  We conclude that the population reaches consensus with vanishingly small probability.

Consider a sequence of $\log^2(n)$ consecutive paths of length $4$.  Each contains a blocking pair with probability at least $\frac{1}{12}$, independently, and hence the probability that this sequence has no blocking pairs is at most $e^{\frac{1}{12}\log^2(n)}$.  A union bound over all such sequences yields that with high probability the distance between any two consecutive blocking pairs is at most $\log^2(n)$.

There are therefore at least $n / \log^2(n)$ contiguous segments of $G$, defined as the paths between blocking pairs.  Furthermore, from the definition of a blocking pair, the final state of one such segment is independent of the private signals of nodes in other segments, conditioned on the opinions of the bordering blocking pairs.  We can therefore think of the graph as consisting of $n / \log^2(n)$ independent paths.  We refer to such a paths as an $RR$, $RB$, or $BB$ segment, referring to the colors selected by the blocking pairs on its boundary.

We now partition the nodes of the cycle into three groups.  The first group is the set of all nodes in blocking pairs.  With high probability, more nodes in this group converge to opinion $R$ than opinion $B$.

The second group is the set of all nodes in $RR$ or $BB$ segments, of length $1$ or $2$.  With high probability, there are $O(n)$ nodes in such segments, and the likelihood that such a segment is $RR$ (rather than $BB$) is at least $\frac{1}{2}+\delta$.  An $RR$ segment of length $1$ or $2$ must certainly converge to consensus on $R$, and similarly a $BB$ sigment of length $1$ or $2$ must converge to consensus on $B$.  Thus, with high probability, the difference between the number of nodes converging to $R$ versus the number converging to $B$ in such segments is $\Omega(\delta n)$.

The third group is the set of nodes in all other types of segments; let $N$ denote its size. With high probability, $N \geq cn$ for a fixed constant $c$, since only a small constant fraction of segments are length $1$ or $2$.  Let $N_R^\delta$ denote the number of such nodes that ultimately converge to opinion $R$, given our value of $\delta$.  Then note that, conditioning on the value of $N$, $N_R$ stochastically dominates $N_R^0$.  By symmetry, $\mathbb{E}[N_R^0] = N/2$.  Moreover, this number is a sum of independent random variables: the number of nodes in each segment that converge to opinion $R$.  Each of these random variables takes a value in $\{0, 1, \dotsc, \log^2(n)\}$.  The Hoeffding bound therefore implies that, for any quantity $y$,
\[ \Pr[ N_R^0 < N/2 - y ] \leq e^{- \frac{2 y^2}{cn (\log^2(n))^2 } }. \]
Taking $y = \log^5(n)\sqrt{n}$, we conclude that with probability at least $1 - 1/n$, $N_R^0$ will be at least $N/2 - y$.  We therefore have that $N_R \geq N/2 - y$ with probability at least $1 - 1/n$.  

Combining the three cases, we conclude that, with high probability, the number of nodes converging to $R$ is at least $n/2 + \Omega(\delta n) - O(\log^5(n) \sqrt{n}) = n/2 + \Omega( \delta n )$.  Thus, for $n$ sufficiently large, the majority of nodes converge to $R$ with high probability.

\subsection{Low Average Degree does not imply Correct Majority}
\label{sec:low-average}

We now show that there is a real difference between average degree and max degree and how it affects achieving a correct majority. Specifically, intuition suggests (and Proposition~\ref{prop:make this a proposition} confirms) that the stochastic process should reach (but not necessarily stabilize in) a correct majority after not too many iterations because each node's report within this timeframe should be ``basically'' independent of the others. We show that this intuition holds only when we define sparse graphs to be those with low maximum degree and not those with low average degree, as Proposition~\ref{prop:make this a proposition} fails to hold on a class of graphs with low average degree.

Define $G_m^\ell$ ($m \geq \ell$) to have nodes partitioned into two sets, $M$ and $L$. $M$ has $m$ nodes and forms a clique. $L$ can be partitioned into $L_1 \sqcup \ldots \sqcup  L_{m}$, where each $L_i$ has $\ell$ nodes, each adjacent only to node $i \in M$. In other words, $G_m^\ell$ has a clique $M$ of $m$ nodes in the middle, and each node of $M$ has $\ell$ leaves hanging off of it. We show essentially that, for a sufficiently large ratio of $m/\ell$, the first node in $M$ to be chosen is likely to announce his own signal, and then every subsequent node in $M$ to be chosen will just copy that signal. At a high level, here is what happens: if every node in $M$ chosen so far has copied the first node's signal, and if, whenever a new node in $M$ is chosen, more nodes in $M$ have been chosen than its leaves, this new node will also copy the other nodes in $M$. We show that this happens with constant probability for sufficiently large $m/\ell$.

We first bound, in terms of $m/\ell$, the probability that the $i^{th}$ distinct node in $M$ will copy the first node's signal the first time it's chosen.

\begin{claim}\label{claim.ithnode}
Let $v_i$ denote the $i^{th}$ distinct node of $M$ chosen in $G_m^\ell$. Let also $\ell_i$ denote the number of distinct nodes in $L_{v_i}$ that have been chosen when $v_i$ is first chosen. Then, for all $i \geq 2$,
$Pr[\ell_i \geq i-1] \leq  e^{-(i-1)\frac{m+\ell}{4m}}$
and
$Pr[\ell_i \geq i-1] \leq 2\frac{\ell}{m+\ell}$.
\end{claim}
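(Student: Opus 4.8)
By symmetry among the $m$ nodes of $M$ (the process is oblivious to labels, and the leaves never influence which node is picked), $\Pr[\ell_i \ge i-1] = \Pr[\ell_i \ge i-1 \mid v_i = w]$ for any fixed $w \in M$; fix such a $w$ and condition on $v_i = w$. The first-appearance order of the $m+\ell$ nodes of $M \cup L_w$ is a uniformly random permutation $\pi$, the event $\{v_i = w\}$ is exactly the event that $w$ is the $i$-th node of $M$ in $\pi$ (probability $1/m$), and $\ell_i$ is the number of leaves of $w$ that precede $w$ in $\pi$. So it suffices to bound $\Pr[b \ge i-1]$, where $b$ is the number of leaves of $w$ preceding $w$, for a uniform permutation of $M \cup L_w$ conditioned on $w$ being the $i$-th element of $M$. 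Conditioning further on the relative order of $M$ alone puts $w$ in position $i$ among the $m$ nodes of $M$; the $\ell$ leaves are then distributed among the $m+1$ gaps of this $M$-sequence, of which exactly $i$ lie before $w$, and inserting the leaves one at a time gives the standard P\'olya-urn description: for any fixed set $T \subseteq L_w$ with $|T| = t$, $\Pr[\text{all of } T \text{ precedes } w] = \prod_{k=0}^{t-1} \frac{i+k}{m+1+k}$, and in particular $b$ is a sum of $\ell$ exchangeable indicators each with mean $\frac{i}{m+1}$.

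For the exponential bound I would use the $(i-1)$-st factorial moment: since $\binom{b}{i-1} \ge 1$ whenever $b \ge i-1$ and $\binom{b}{i-1}$ counts $(i-1)$-subsets of ``before'' leaves,
\[
\Pr[b \ge i-1] \;\le\; \E\!\left[\binom{b}{i-1}\right] \;=\; \binom{\ell}{i-1}\prod_{k=0}^{i-2}\frac{i+k}{m+1+k}.
\]
Bounding $\binom{\ell}{i-1} \le \bigl(\tfrac{e\ell}{i-1}\bigr)^{i-1}$ and each factor $\frac{i+k}{m+1+k}$ (for $0 \le k \le i-2$) by $\frac{2(i-1)}{m}$, then simplifying, shows that in the parameter regime of this section (where $m$ is a sufficiently large multiple of $\ell$) this expression is at most $e^{-(i-1)(m+\ell)/4m}$, which is what is needed for large $i$.

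For the $\frac{2\ell}{m+\ell}$ bound, note first it is vacuous when $i-1 > \ell$ (then $b \le \ell < i-1$ forces the probability to be $0$), so assume $i-1 \le \ell$. For small $i$ I would use $\Pr[b \ge i-1] \le \Pr[b \ge 1] = 1 - \Pr[b=0]$ together with $\Pr[b = 0] = \prod_{k=0}^{\ell-1}\frac{m+1-i+k}{m+1+k}$ and $1 - \prod(1-p_k) \le \sum p_k$, which gives $\Pr[b \ge 1] \le (i-1)\sum_{k=0}^{\ell-1}\frac{1}{m+1+k} \le \frac{2\ell}{m+\ell}$ for the first values of $i$; for $i=2$ this is the exact identity $\Pr[\ell_2 \ge 1] = 1 - \frac{m(m-1)}{(m+\ell)(m+\ell-1)} \le \frac{2\ell}{m+\ell}$, valid for every $\ell \ge 1$. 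For the remaining (larger) $i$ with $i-1 \le \ell$, I would again invoke the factorial-moment estimate $\binom{\ell}{i-1}\prod_{k=0}^{i-2}\frac{i+k}{m+1+k}$ and check directly that it never exceeds $\frac{2\ell}{m+\ell}$ on the range $2 \le i-1 \le \ell \le m$; taking, for each $i$, the smaller of the two estimates yields the claimed uniform bound.

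The main obstacle is the reduction in the first paragraph: pinning down the conditioning ``$w$ is the $i$-th element of $M$'' exactly and justifying the P\'olya-urn law for the leaf positions. Once that is in place everything else is elementary first-moment bookkeeping; the only subtlety is that the two target bounds are tight in different regimes of $i$, so both estimates are genuinely needed and must be stitched together across $2 \le i \le m$.
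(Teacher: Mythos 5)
Your proposal is correct in its core mechanics but takes a genuinely different route from the paper. The paper re-samples the process so that the distinct first appearances of $M\cup L_{v_i}$ are revealed one at a time, observes that $\ell_i\ge i-1$ forces at most $i-1$ of the first $2i-2$ such reveals to come from $M$, and applies a multiplicative Chernoff bound to these negatively correlated indicators (each with mean $\tfrac{m}{m+\ell}$); the second inequality is handled by reducing to $i=2$ via monotonicity and a union bound over the first two reveals. You instead condition on the identity and $M$-rank of $v_i$, derive the exact P\'olya-urn law $\prod_{k=0}^{t-1}\tfrac{i+k}{m+1+k}$ for the leaf placements, and bound the tail by the $(i-1)$-st factorial moment, i.e.\ a union bound over $(i-1)$-subsets of leaves. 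Your reduction is sound, and your method yields the cleaner and in fact stronger estimate $\bigl(\tfrac{2e\ell}{m}\bigr)^{i-1}$, which decays geometrically in $i$ with ratio $O(\ell/m)$ rather than the paper's fixed ratio $e^{-(m+\ell)/4m}\ge e^{-1/2}$; it also avoids the paper's somewhat delicate Chernoff step. Your restriction to the regime where $m$ is a large constant multiple of $\ell$ is not a defect: the claim as literally stated for all $m\ge\ell$ is actually false (at $m=\ell$, $i=2$ one computes $\Pr[\ell_2\ge 1]=1-\tfrac{m(m-1)}{(m+\ell)(m+\ell-1)}\to 3/4>e^{-1/2}$), and the only instantiation used downstream is $m=200\ell$, where your argument goes through.

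One step does need repair. Your assertion that the factorial-moment estimate $\binom{\ell}{i-1}\prod_{k=0}^{i-2}\tfrac{i+k}{m+1+k}$ ``never exceeds $\tfrac{2\ell}{m+\ell}$ on the range $2\le i-1\le \ell\le m$'' is false: for $i=3$ and $m=2\ell$ the estimate is roughly $\tfrac{6\ell^2}{m^2}=\tfrac32$, which exceeds $1$, let alone $\tfrac{2\ell}{m+\ell}=\tfrac23$; and your other estimate $(i-1)\ell/(m+1)$ is also useless there. The fix is to restrict this verification to the same regime you already invoked for the exponential bound: for $m\ge 16\ell$ (say) one has $\bigl(\tfrac{2e\ell}{m}\bigr)^{i-1}\le\bigl(\tfrac{2e\ell}{m}\bigr)^{2}\le\tfrac{2\ell}{m+\ell}$ for all $i\ge 3$, and your exact computation handles $i=2$ for every $\ell\ge 1$. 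With that restriction stated explicitly, the proof is complete in the regime where the claim is true and where it is applied.
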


\begin{proof}
Sample the nodes chosen by the process in the following way: first, sample the order in which distinct nodes in $M$ will be revealed (i.e. pick an ordering of $M$ uniformly at random). Then, each time a new node is chosen, first sample whether or not it's a new node in $M$. If so, choose the next element of $M$ to be revealed. If not, pick a node uniformly at random from $L$ and the nodes in $M$ that have already been revealed. Then it is clear that $\ell_i \geq i-1$ if and only if $i-1$ distinct nodes of $L_{v_i}$ are chosen before $i$ distinct nodes of $M$ are chosen. So consider the $j^{th}$ distinct node in $M \cup L_{v_i}$ revealed, and let $X_j$ be the random variable that is $1$ if the node is from $M$, and $0$ if it's from $L_{v_i}$. Then clearly, either $i$ distinct nodes in $M$ or $i-1$ distinct nodes in $L_{v_i}$ have been revealed by the time $2i-2$ distinct nodes in $M \cup L_{v_i}$ have been revealed. So we see that $\ell_i \geq i-1$ if and only if
$\sum_{j=1}^{2i-2} X_j \leq i-1.$
We show now that this has low probability by using a Chernoff bound. It is easy to see that the set of $X_j$s are negatively correlated, and also that $\mathbb{E}[X_j] = \frac{m}{m+\ell}$. So $\mathbb{E}[\sum_{j=1}^{2i-2} X_j] = \frac{m(2i-2)}{m+\ell}$. So, by a Chernoff bound, 
\begin{align*}
Pr\left[\sum_{j=1}^{2i-2} X_j \leq i-1\right] &\leq e^{-(\frac{m+\ell}{2m})^2\frac{2m(i-1)}{m+\ell}/2}
\leq e^{-(i-1)\frac{m+\ell}{4m}}.
\end{align*}
The second part of the claim is easier to prove: it's clear that $Pr[\ell_i \geq i-1] \leq Pr[\ell_2 \geq 0]$ for all $i$. It is also easy to see that $Pr[\ell_2 \geq 0]$ is at most $2\frac{\ell}{m+\ell}$ (because $\ell_2 \geq 0$ if and only if $X_1$ or $X_2$ is $1$).
\end{proof}

\begin{corollary}\label{cor.allnodes}
For any choice of $j > 1$, with probability at least $1 - \frac{2(j-2)\ell}{m+\ell} - \frac{e^{-(j-1)\frac{m+\ell}{4m}}}{1-e^{-\frac{m+\ell}{4m}}} - \frac{\ell}{m+\ell}$, whenever any node in $M$ is chosen for the first time in $G_m^\ell$, it copies the signal of the first node chosen from $M$.
\end{corollary}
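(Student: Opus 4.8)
The plan is to isolate a single deterministic ``good event'' $\mathcal{E}$ on which the conclusion holds with certainty, and then bound $\Pr[\mathcal{E}^{c}]$ by a union bound fed directly by Claim~\ref{claim.ithnode}. As in that claim, let $v_1, v_2, \dots$ be the distinct nodes of $M$ in the order they are first chosen, and let $\ell_\iota$ be the number of leaves of $L_{v_\iota}$ chosen strictly before $v_\iota$ is first chosen. Set
\[
\mathcal{E} \;=\; \{\ell_1 = 0\}\;\cap\;\bigcap_{\iota \ge 2}\{\ell_\iota \le \iota - 2\}.
\]
I claim that on $\mathcal{E}$, every node of $M$ announces $X(v_1)$ the first time it is chosen; granting this, the corollary reduces to a bound on $\Pr[\mathcal{E}^{c}]$.

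\emph{Why $\mathcal{E}$ suffices.} I would prove by induction on the steps $t$ of the process the invariant $I_t$: at time $t$, every coloured node of $M$ has colour $X(v_1)$. Note $I_t$ forces any leaf of an $M$-node $w$ that is coloured \emph{after} $w$'s first appearance to also have colour $X(v_1)$, since such a leaf copies its unique neighbour $w$, which by the invariant has colour $X(v_1)$ and is never un-coloured. Base case: on $\{\ell_1 = 0\}$, when $v_1$ is first chosen all of its neighbours are un-coloured (its leaves because $\ell_1 = 0$, its $M$-neighbours because $v_1$ is the first $M$-node chosen), so the tie-breaking rule makes $v_1$ announce $X(v_1)$. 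Inductive step: suppose an $M$-node $w = v_\iota$ is chosen at step $t$, whether or not for the first time. Its neighbours whose colour is the opposite of $X(v_1)$ can only be leaves of $w$ coloured \emph{before} $w$'s first appearance, of which there are at most $\ell_\iota$; meanwhile, by $I_{t-1}$, the nodes $v_1, \dots, v_{\iota-1}$ (all neighbours of $w$ in the clique $M$, and all already coloured) have colour $X(v_1)$, giving at least $\iota - 1$ neighbours coloured $X(v_1)$. On $\mathcal{E}$ we have $\ell_\iota \le \iota - 2 < \iota - 1$ (reading $\ell_1 = 0$ when $\iota = 1$), so $w$ sees a strict majority for $X(v_1)$ and announces it, maintaining $I_t$. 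In particular every $M$-node announces $X(v_1)$ upon its first selection, as desired. The one subtlety here is that strictness matters: with only $\ell_\iota \le \iota - 1$, node $w$ could face an exact tie and revert to its own (possibly wrong) signal, which is why the relevant bad event is the complement of $\{\ell_\iota \ge \iota - 1\}$, exactly the quantity estimated in Claim~\ref{claim.ithnode}.

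\emph{Bounding $\Pr[\mathcal{E}^{c}]$.} By a union bound,
\[
\Pr[\mathcal{E}^{c}] \;\le\; \Pr[\ell_1 \ge 1] \;+\; \sum_{\iota=2}^{j-1}\Pr[\ell_\iota \ge \iota - 1] \;+\; \sum_{\iota \ge j}\Pr[\ell_\iota \ge \iota - 1].
\]
For the first term, the order of first appearances of the $n$ nodes is a uniform random permutation; writing $v_1$ for the first $M$-node to appear, $\ell_1 \ge 1$ holds iff some leaf of $v_1$ appears before $v_1$, so summing over the identity of $v_1$,
\[
\Pr[\ell_1 \ge 1] \;=\; \sum_{v \in M}\Big(\Pr[v\text{ first in }M] - \Pr[v\text{ first in }M \cup L_v]\Big) \;=\; m\Big(\tfrac{1}{m} - \tfrac{1}{m+\ell}\Big) \;=\; \tfrac{\ell}{m+\ell}.
\]
For the middle sum I would apply the second bound of Claim~\ref{claim.ithnode}, $\Pr[\ell_\iota \ge \iota - 1] \le \tfrac{2\ell}{m+\ell}$, to each of the $j - 2$ indices $\iota \in \{2, \dots, j-1\}$. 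For the tail I would apply the first bound of Claim~\ref{claim.ithnode}, $\Pr[\ell_\iota \ge \iota - 1] \le e^{-(\iota-1)(m+\ell)/(4m)}$, and sum the geometric series from $\iota = j$, obtaining $\tfrac{e^{-(j-1)(m+\ell)/(4m)}}{1 - e^{-(m+\ell)/(4m)}}$. Adding the three contributions yields exactly the bound on $\Pr[\mathcal{E}^{c}]$ claimed in the corollary.

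\emph{Main obstacle.} The only genuinely delicate part is the invariant argument: correctly identifying that the only ``wrongly'' coloured neighbours an $M$-node $w$ can ever see are its leaves coloured before $w$'s first appearance (all other coloured neighbours — other $M$-nodes, and leaves coloured after $w$'s first appearance — are pinned to $X(v_1)$ by the invariant together with the fact that colours are never erased), and noticing that exact ties must be excluded, which forces the strict bound $\ell_\iota \le \iota - 2$. Once the invariant is phrased this way, the remainder is a routine union bound over the two estimates of Claim~\ref{claim.ithnode} plus the elementary permutation computation of $\Pr[\ell_1 \ge 1]$.
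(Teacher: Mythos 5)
Your proof is correct and follows essentially the same route as the paper's: a union bound over the bad events $\{\ell_1 \ge 1\}$ and $\{\ell_\iota \ge \iota-1\}$, using the two bounds of Claim~\ref{claim.ithnode} for the ranges $2 \le \iota \le j-1$ and $\iota \ge j$ respectively, plus the geometric series for the tail. Your explicit induction showing that coloured $M$-nodes never deviate from $X(v_1)$ even upon re-selection (needed so that $v_\iota$ really sees $\iota-1$ correctly coloured clique neighbours at its first selection) is a careful treatment of a step the paper handles only informally via Observation~\ref{obs.nochange}.
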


\begin{proof}
$\sum_{i = j}^\infty e^{-(i-1)\frac{m+\ell}{4m}}$ is a geometric sum with ratio $e^{-\frac{m+\ell}{4m}}$, so $\sum_{i = j}^\infty e^{-(i-1)\frac{m+\ell}{4m}} = \frac{e^{-(j-1)\frac{m+\ell}{4m}}}{1-e^{-\frac{m+\ell}{4m}}}$. By Claim~\ref{claim.ithnode} and a union bound, the probability that $\ell_i > i-1$ for any $v_i$, $i \geq j$ is at most $ \frac{e^{-(j-1)\frac{m+\ell}{4m}}}{1-e^{-\frac{m+\ell}{4m}}}$. Also by Claim~\ref{claim.ithnode} and a union bound, the probability that $\ell_i > i-1$ for any $2 \leq i < j$ is at most $2(j-2)\frac{\ell}{m+\ell}$. Lastly, the probability that the first node in $M$ to be chosen is chosen before any of its leaves is exactly $\frac{\ell}{m+\ell}$. In the event that this happens, it will clearly report its own signal. Taking a union bound over all three events proves the corollary.
\end{proof}

At this point we have shown that when nodes in $M$ are chosen for the first time, they are likely to copy the opinion of the first node chosen. We show now that whenever this happens, no node in $M$ will change their report when they are chosen again.

\begin{observation}\label{obs.nochange}
Consider any sequence of the process on $G_m^\ell$ such that every node in $M$ copies the opinion of the first node chosen in $M$ the first time they are chosen. Then every node in $M$ will continue to copy this opinion if they are chosen again later.
\end{observation}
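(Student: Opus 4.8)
The plan is to fix the color $x$ reported by the first node of $M$ to be chosen, write $\bar x$ for the other color, and for $w\in M$ let $\tau_w$ be the first step at which $w$ is chosen. I would prove, by strong induction on the time step $s$, the joint invariant: \emph{(i)} every $w\in M$ with $\tau_w\le s$ has $C^s(w)=x$; and \emph{(ii)} for every such $w$, the advantage $D^s(w):=N^s_x(w)-N^s_{\bar x}(w)$ (neighbours of $w$ coloured $x$ minus those coloured $\bar x$, under the colouring after step $s$) satisfies $D^s(w)\ge D^{\tau_w}(w)$. The observation follows at once from \emph{(i)}: when an $M$-node is re-chosen at a step $s>\tau_w$, its colour is already $x$, so it copies $x$.

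Two facts feed the inductive step. Because the hypothesis of the observation says $w$ reports $x$ at step $\tau_w$, we have $D^{\tau_w}(w)\ge 0$, and if $D^{\tau_w}(w)=0$ then that report broke a tie, so $X(w)=x$. Also, each node of $L_w$ is adjacent only to $w$, so updating a leaf never recolours a node of $M$ and affects $D^s(\cdot)$ only at that leaf's unique $M$-neighbour.

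For the inductive step at step $s$, let $z$ be the node chosen. If $z$ is a leaf of $w_0\in M$ with $\tau_{w_0}\le s-1$, then $C^{s-1}(w_0)=x$ by \emph{(i)}, so $z$ copies $x$; this raises $z$'s contribution to $D(w_0)$ from a value in $\{-1,0\}$ to $+1$, hence $D^s(w_0)\ge D^{s-1}(w_0)\ge D^{\tau_{w_0}}(w_0)$, while no other $M$-node is touched (if $w_0$ has not yet been chosen there is nothing to check). If $z=w_0\in M$ is chosen for the first time, then $C^s(w_0)=x$ by hypothesis and \emph{(ii)} is trivial for $w_0$; for every other $w'\in M$ the clique-neighbour $w_0$ just went from uncoloured to $x$, so $D^s(w')=D^{s-1}(w')+1$. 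If $z=w_0\in M$ with $\tau_{w_0}<s$, then by \emph{(ii)} at step $s-1$, $D^{s-1}(w_0)\ge D^{\tau_{w_0}}(w_0)\ge 0$: if this is positive, $w_0$ reports $x$; if it is $0$, then $D^{\tau_{w_0}}(w_0)=0$, so $X(w_0)=x$ and $w_0$ reports $x$ at the ensuing tie — either way $C^s(w_0)=x$, and recolouring $w_0$ leaves $D(w_0)$ unchanged (as $w_0\notin N(w_0)$) and leaves every other $D(w')$ unchanged (since $w_0$ was and stays coloured $x$). In all cases the invariant is preserved, and it holds vacuously before the first $M$-node is chosen.

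The one delicate point is the apparent circularity between \emph{(i)} and \emph{(ii)}: the monotonicity of $D^s(w)$ relies on leaves of $w$ copying $x$, which relies on $w$ having stayed coloured $x$ — which is \emph{(i)}. Threading both statements through a single induction on $s$ removes the circularity, and the tie bookkeeping ``$D^{\tau_w}(w)=0\Rightarrow X(w)=x$'' is precisely what keeps the argument valid when an $M$-node is re-chosen at a tied neighbourhood; I expect this to be the only step requiring genuine care.
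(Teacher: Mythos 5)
Your proof is correct and takes essentially the same approach as the paper's: the paper's two-sentence argument is exactly that leaves of an announced $M$-node can only copy $x$, so the $x$-advantage at each $M$-node is non-decreasing after its first announcement, and you simply make the induction and the tie-breaking bookkeeping ($D^{\tau_w}(w)=0\Rightarrow X(w)=x$) explicit where the paper leaves them implicit. One trivial nitpick: a previously chosen leaf's contribution to $D(w_0)$ may already be $+1$ (not only in $\{-1,0\}$), but this does not affect the monotonicity conclusion.
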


\begin{proof}
Once a node $v \in M$ makes an announcement, every leaf in $L_v$ that announces after $v$ will copy $v$. As $v$ copied the first node of $M$ the first time it was chosen, along with every other node in $M$, this means that no more neighbors of $v$ will disagree with this opinion when $v$ is chosen again, but more nodes may agree.
\end{proof}

Plugging in $j = 50$ and $m/\ell = 200$ gives a bound of at least $1/3$ in Corollary~\ref{cor.allnodes}. It is also easy to see that the average degree of $G_{200\ell}^\ell$ is no more than $201$. Therefore, as we let $\ell$ grow to $\infty$, we can get an arbitrarily large graph of constant average degree that arrives at a blue consensus with non-negligible probability. In fact, the graph will reach a blue majority after only a linear number of iterations of the process, which ``violates'' Proposition~\ref{prop:make this a proposition} (i.e. there is no way to update the constants in Proposition~\ref{prop:make this a proposition} and replace max degree with average degree to obtain a true statement).

\begin{claim}
The probability of a blue consensus in $G_{200\ell}^\ell$ is at least $1/6 - \delta/3$ for all $\ell > 0$. Furthermore, as $\ell \rightarrow \infty$, the probability that there is a blue majority after $2000\ell^2$ rounds approaches $1/6 - \delta/3$.
\end{claim}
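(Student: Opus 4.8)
The plan is to read everything off from Corollary~\ref{cor.allnodes} and Observation~\ref{obs.nochange}, plus the observation that the ``bad order'' estimates behind Corollary~\ref{cor.allnodes} concern only the order in which nodes are selected and are therefore independent of the private signals. Concretely, I would let $A$ be the selection-order event furnished by Corollary~\ref{cor.allnodes} with $j=50$ and $m=200\ell$: on $A$, which has probability at least $1/3$, the first $M$-node $v_1$ to be chosen reports $X(v_1)$, and every other node of $M$ copies $X(v_1)$ the first time it is chosen. Since the tail estimates behind this bound (Claim~\ref{claim.ithnode}) involve only the quantities $\ell_i$, the event $A$ can be taken to depend only on the selection sequence, hence to be independent of $\mathcal{X}$; and by Observation~\ref{obs.nochange}, on $A$ every node of $M$ keeps reporting $X(v_1)$ at all subsequent selections.

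For the consensus statement I would then argue: on $A$, almost surely every node of $M$ is eventually chosen, hence eventually permanently coloured $X(v_1)$; once that has happened each leaf, whose only neighbour is its $M$-parent, copies $X(v_1)$ at its next selection and never changes again, so the dynamics converges to consensus on $X(v_1)$. Because $X(v_1)$ is independent of $A$ and equals blue with probability $1/2-\delta$, this gives $\Pr[\text{blue consensus}] \geq \Pr[A]\cdot(1/2-\delta) \geq (1/3)(1/2-\delta) = 1/6 - \delta/3$, valid for all $\ell>0$ since the bound in Corollary~\ref{cor.allnodes} with $m/\ell=200$ does not depend on $\ell$.

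For the ``furthermore'' part, note that $n = 200\ell(\ell+1)$ while $T := 2000\ell^2 = \Theta(n)$, so at round $T$ a constant fraction of nodes are still uncoloured; the goal is to show that a blue \emph{majority} has nevertheless formed by round $T$ on $A\cap\{X(v_1)=\text{blue}\}$ with probability $1-o(1)$ as $\ell\to\infty$, whence $1/6-\delta/3-o(1)$ follows as before. I would call a leaf $u\in L_v$ \emph{good} if $v$ is selected at least once during rounds $1,\dots,T/2$ and $u$ is selected at least once during rounds $T/2+1,\dots,T$; on $A$ every good leaf is blue at time $T$, since at its last selection (which occurs after round $T/2$, hence after $v$'s first selection) the node $v$ already reports blue. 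A leaf can fail to be good only if it is missed during the second half of rounds, or its parent is missed during the first half. The first kind is a sum of negatively correlated $0/1$ variables over the $m\ell$ leaves, with mean $m\ell(1-1/n)^{T/2}\approx m\ell e^{-5}$; the second kind contributes exactly $\ell$ times the number of $M$-nodes missed in the first half, a sum of negatively correlated $0/1$ variables over the $m$ nodes of $M$ with mean $\approx m e^{-5}$. A Chernoff bound for negatively correlated variables then makes both counts smaller than, say, $\tfrac{1}{100}m\ell$ with probability $1-o(1)$, leaving at least $0.98\,m\ell > n/2$ good, and hence blue, leaves at round $T$; multiplying the probability of $A$ intersected with this selection-order concentration event by $\Pr[X(v_1)=\text{blue}]=1/2-\delta$ finishes the argument.

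The step I expect to be the real obstacle is this last concentration argument. Since all $\ell$ leaves of a node $v$ share the single neighbour $v$, their time-$T$ colours are strongly correlated — they all turn blue together once $v$ is chosen — so the number of blue leaves is not a sum of nearly-independent terms, and a naive second-moment bound on it is too weak to beat $n/2$ with probability $1-o(1)$. Splitting the horizon into two halves and charging each non-good leaf either to the leaf itself being missed in the second half (negatively correlated over all $m\ell$ leaves) or to its parent being missed in the first half (negatively correlated over only the $m$ nodes of $M$, so the $\ell$-fold blow-up is harmless) is the device that decouples things; everything else is routine bookkeeping with the geometric/negatively-correlated tail bounds already used in Claim~\ref{claim.ithnode} and Corollary~\ref{cor.allnodes}, together with the fact (used implicitly above) that $X(v_1)$ is a single $\mathrm{Bernoulli}(1/2-\delta)$ draw independent of the selection sequence.
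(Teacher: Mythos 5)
Your proposal is correct and takes essentially the same route as the paper: the first part is exactly the paper's combination of Corollary~\ref{cor.allnodes} and Observation~\ref{obs.nochange} with $j=50$, $m/\ell=200$ (an order-only event of probability at least $1/3$, multiplied by $\Pr[X(v_1)=\text{blue}]=1/2-\delta$), and your two-half ``good leaf'' argument for the $2000\ell^2$-round statement is the same two-phase count the paper makes --- the first $1000\ell^2$ rounds to color almost all of $M$, the next $1000\ell^2$ to color almost all of their leaves --- with only slightly tidier bookkeeping for the concentration step.
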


\begin{proof}
The first part of the claim is an immediate corollary of Corollary~\ref{cor.allnodes} and Observation~\ref{obs.nochange} (and plugging in $j = 50$, $m/\ell = 200$). The second part of the claim comes from the observation that after $1000 \ell^2$ rounds, it is extremely likely (with probability approaching $1$ as $n \rightarrow \infty$) that at least $3/4$ of the nodes in $M$ have been chosen. Therefore, with probability approaching $1/6 - \delta/3$, after $1000 \ell^2$ rounds, there will be at least $150\ell$ nodes in $M$ who have announced blue, and none who have announced red. From this point, any node adjacent to these $150\ell$ nodes that is chosen will also announce blue. As this is over $3/4$ of the entire graph, it is also extremely likely that more than $2/3$ of these nodes will be chosen in the next $1000 \ell^2$ rounds. As each such chosen node will announce blue if chosen, we would reach a blue majority.
\end{proof}

\end{document}